\newtheorem{theorem}{Theorem}
\newtheorem{lemma}{Lemma}
\newtheorem{corollary}{Corollary}
\def\ve{\varepsilon}
\def\E{\mathbf{E}}
\def\Pr{\mbox{{\bf Pr}}}
\def\Prob{\Pr}
\def\whp{{w.h.p.}}
\def\bigO{\mathcal O}
\newenvironment{proof}{\trivlist\item[]\emph{Proof}.}%
{\unskip\nobreak\hskip 1em plus 1fil\nobreak$\Box$
\parfillskip=0pt%
\endtrivlist}
\newcommand{\ignore}[1]{}
\newcommand{\brac}[1]{\left( #1 \right)}
\newcounter{rot}
\begin{document}

\title{Fast plurality consensus in regular expanders
\thanks{This work was supported in part by EPSRC grant EP/M005038/1,
``Randomized algorithms for computer networks''. Nicol\'as~Rivera was supported by funding from Becas CHILE. Takeharu Shiraga was supported by JSPS KAKENHI Grant Number 15J03840.
Work carried out while Takeharu Shiraga
was visiting King's College London with the support of the  ELC project (Grant-in-Aid for Scientific Research on Innovative Areas MEXT Japan).
}}

\author{
Colin Cooper\thanks{Department of Informatics, King's College London, UK.
{\tt colin.cooper@kcl.ac.uk}}
\and Tomasz Radzik\thanks{Department of Informatics, King's College London, UK.
{\tt tomasz.radzik@kcl.ac.uk}}
\and Nicol\'as Rivera\thanks{Department of Informatics, King's College London, UK.
{\tt nicolas.rivera@kcl.ac.uk}}
\and Takeharu Shiraga\thanks{
Theoretical Computer Science Group, Department of Informatics, Kyushu University, Fukuoka, Japan.
{\tt shiraga@tcslab.csce.kyushu-u.ac.jp}}
}

\maketitle

\begin{abstract}
The problem of reaching  consensus in a graph by means of local interactions
is an abstraction of such behavior in human society as well as some distributed processes in computer networks. In a {\em voting process} on a graph vertices revise their opinions in a distributed way based on the opinions of  nearby vertices. The classic example is synchronous pull voting where
at each step, each vertex  adopts the opinion of a random neighbour.
This type of pull voting suffers from two main drawbacks.
Even if there are only two opposing opinions, the time taken for a single opinion to emerge can be slow, and the final opinion is not necessarily the initial majority.
Things can often be improved by
using  a variant of synchronous pull voting
in which each vertex considers the opinions of two neighbours.
For many classes of $n$-vertex regular expanders,
consensus is now reached in $O(\log n)$
expected steps~\cite{CER},
as opposed to $\Theta(n)$ expected steps~\cite{CEHR-SIAM2013} when only one neighbour is contacted.
Moreover, this protocol allows the
initial majority opinion to win with high probability.

In the case where there are initially  three or more opinions, not so much is known about the performance of  voting using two or more samples.
A problem arises when there is no clear majority. Thus one class of opinions may be largest, but its total size is less than that of two other opinions put together. When there are three or more opinions, the term {\em plurality} is often used to distinguish this case from that of an overall majority.

In the case where the underlying network is the complete graph $K_n$,
Becchetti et.\ al~\cite{becchetti2014plurality, becchetti2014plurality_arXiv}
analysed the general case of $k \ge 3$ opinions using a {\em three-sample voting process} and proved the following result.
Let $A_1$ be the
initial size of the largest opinion.
Then if the difference between the
initial sizes  of the largest and second largest opinions
is at least $C n  \sqrt{(\log n)/A_1}$, for some  suitable constant $C$,
the largest opinion wins in $O((n \log n)/A_1)$ steps
with high probability.

In this paper we show that similar performance can be achieved
on  {\em $d$-regular expanders}
using {\em two-sample voting}.
Namely, if
 the difference between the
initial sizes  of the largest and second largest opinions
is at least
$ C n \max\{\sqrt{(\log n)/A_1}, \lambda\}$,
for some  suitable constant $C$, then
the largest opinion wins in $O((n \log n)/A_1)$ steps
with high probability. Here $\lambda$ is
the absolute second eigenvalue of transition matrix $P=Adj(G)/d$
of a simple random walk on the graph $G$.
For almost all $d$-regular graphs,
we have $\lambda=c/\sqrt{d}$ for some  constant $c>0$ \cite{F}.
Thus as $d$ increases we can separate an opinion whose plurality is $o(n)$,
whereas a plurality of $\Theta(n)$  appears to be needed for $d$ constant.
Finally  for $d$ constant we show how this $\Theta(n)$ barrier can be reduced
by sampling using short random walks.
\end{abstract}

%
%

\section{Introduction}

The problem of reaching  consensus in a graph by means of local interactions
is an abstraction of such behavior in human society as well as some processes in computer networks. In a {\em voting process} on a graph,  vertices revise their opinions in a systematic and distributed way based on  opinions of  other vertices, typically using a sample of their local neighbours.
The aim is that eventually a single opinion will emerge, and that this opinion will reflect the relative importance of the original mix of opinions in some way.

Voting processes  
are a natural approach to achieving consensus,
and as a consequence they have been widely studied.
Distributed voting finds  application in various
fields of computing including consensus and leader election in large networks
\cite{BMPS04, HassinPeleg-InfComp2001},
serialisation of read/write in replicated data-bases \cite{Gifford79},
and analysis of social behavior \cite{DP94}.
In general,  a voting process should  be conceptually simple, fast,
fault-tolerant and straightforward to implement \cite{HassinPeleg-InfComp2001, Joh89}.

In outline, a voting process can be described as follows. Each vertex of a connected graph
has one of several possible opinions. In each time-step,  each vertex
 queries the opinion one or more of its neighbours  using the same protocol, and decides whether to modify or to keep its current opinion.
When
all vertices have a common (and thus final) opinion, we say a consensus has been reached.
For a given voting process, the main questions of interest are the probability that a particular opinion wins
and the expected time to reach consensus.
The most well known model is synchronous pull voting.
In this model,  at each step each vertex changes its opinion to that of a random neighbour.

In the classical {\em voter model} each vertex initially has a distinct opinion,
but in general we can assume the vertices are restricted to hold one of $k$ different opinions.
The simplest case, {\em two party voting}, is when there are initially two opinions ($k=2$).
If there are at least three opinions ($k \ge 3$) the problem is often referred to as {\em plurality consensus}.
Not so much is known about improving the performance of  voting by using two or more samples in the case where there are initially  three or more opinions.

If some opinion has an absolute majority, we can group the other opinions together into a single minority class, and use the above two-sample protocol.
A problem arises when there is no clear majority. Thus one class of opinions may be largest, but its total size is less than that of two other opinions put together. When there are three or more opinions the term {\em plurality} is often used to distinguish this case from the overall majority one.

For the problem to be one of plurality consensus, we assume that the initial configuration is such that one opinion is dominant, but there is no overall majority.
We might expect that
the  dominant opinion eventually becomes
the final opinion of all vertices.
This, however, strongly depends on the voting process.
If simple pull voting is used, then given the graph is
connected (and aperiodic) the probability that a particular opinion wins
is proportional to the initial degree of the opinion in the graph (see \cite{HassinPeleg-InfComp2001}).
More precisely, if $A$ is the set of vertices initially holding a given opinion, then
the probability $A$ wins in the voting process is
\begin{equation}\label{Awins}
\Pr(A\text{ wins})= \sum_{v \in A} \frac{d(v)}{2m}= \frac{d(A)}{2m},
\end{equation}
where $d(v)$ is the degree of vertex $v$ and $m$ is the number of edges in the graph.
Surprisingly, the probability here depends only on the voting process and not on the initial arrangement of opinions on the graph (any set of vertices of the same total degree would do).

We assume henceforth that the graphs we consider are connected and that the graph is not bipartite, so that a consensus is possible.
For an $n$-vertex graph, let  $\E T= \E T(n)$ be the expected value of the time to consensus $T$.
Much of the early work was on
analysing t $\E T$ for classical pull voting in an asynchronous model in a continuous time setting. Here the vertices have independent exponentially distributed waiting times
(Poisson clocks); see e.g.  Cox~\cite{Cox-1989} and Aldous~\cite{Aldous-MeetingTimes}.
In the synchronous model the expected time to consensus can be bounded by
$\E T = O(H_{\max} \log n)$, where $H_{\max}=O(n^3)$ is the maximum hitting time of
any vertex by a random walk; see Aldous and Fill~\cite{AlFi}.
For regular expanders these results can be improved to $\E T = \Theta(n)$, see \cite{CEHR-SIAM2013}.

Because the classical pull voting tends to be slow ($\E T =\Theta(n)$ for regular expanders)
and may be viewed as undemocratic,
there has been considerable interest in modifying this simple voting process to avoid these two problems.
Instead of taking the opinion of only one neighbour, the next simplest approach to sample the opinions of a larger number of neighbours
(say two or three),
compare them in some way, and hope that the so-called \lq power of two choices\rq\ improves
the performance of voting.
The consequences of this approach are as follows.
Firstly, the number of
neighbours queried affects the consensus time and the voting outcome.
Secondly,  the relative size of the opinions
 affects
the ability of the process to ensure that the largest initial opinion wins.
Not surprisingly, analysing this relation becomes harder when we move from two party voting to
plurality consensus ($k\ge 3$).
The additional challenge is that
the well established techniques used in analysis of the classical pull voting
(for example, the correspondence with multiple coalescing
random walks~\cite{Aldous-MeetingTimes, CEHR-SIAM2013})
do not have ready extensions or generalisations to multi-sample voting.

In this setting we study the following protocols for two-sample and three-sample voting.
In the two-sample voting model, at each step, each vertex $v$ chooses two random neighbours
with replacement, and if the selected vertices have the same the opinion, then $v$ adopts it;
otherwise $v$ keeps its current opinion.
In the three-sample voting model, each vertex $v$
chooses three random neighbours with replacement, and $v$ adopts the majority opinion among them.
If there is no majority, $v$ picks the opinion of the first sampled neighbour.
Other rules are equally possible here, e.g. $v$ keeps its opinion.
The rule we choose is the one used by  Becchetti et. al.  \cite{becchetti2014plurality,becchetti2014plurality_arXiv}, and we adopt it for consistency.

Two-sample voting was studied in \cite{CER} for the case where there are
initially two opinions ($k = 2$). They proved that in $d$-regular expanders
the initial majority wins with high probability (\whp)\footnote{``With high probability'' (\whp) means in this paper probability at least $1 - n^{-\alpha}$, for
a constant $\alpha > 0$.}
provided the initial difference between the sizes of the two opinions is sufficiently large,
and that voting is completed in $\bigO(\log n)$ steps. This is tight since the diameter of a $d$-regular
graph is $\Omega(\log n)$ for constant $d$. In \cite{cooper2015fast} the authors extend the above
result to general expander graph, extending the analysis to non-regular graph.

As hinted at above the analysis for plurality consensus ($k \ge 3$)
tends to be trickier than for two party voting.
This is especially true as $k$ increases,
or if two minorities together are much larger then the majority opinion.
Plurality consensus using the three-sample voting protocol given above
was studied by
Becchetti et. al.  \cite{becchetti2014plurality,becchetti2014plurality_arXiv}.
They proved that for the complete graph $K_n$, if the difference between the
initial sizes $A_1$ and $A_2$ of the largest and second largest opinions
is at least $A_1-A_2= 24 n  \sqrt{2(\log n)/A_1}$,
then the largest opinion wins in $O((n \log n)/A_1)$ steps \whp\
They also showed that this result is tight for some ranges of the parameters.

\subsection{Our contributions}

In this paper we extend the results of~\cite{becchetti2014plurality,becchetti2014plurality_arXiv} from the  complete graph to $d$-regular expanders preserving
 the same asymptotic convergence time. To do this, we generalize
the results of~\cite{cooper2015fast} from two-party voting to $k$-party voting.
We also give a natural coupling of the three-sample process
of~\cite{becchetti2014plurality,becchetti2014plurality_arXiv} with the two-sample process of \cite{cooper2015fast},
which allows us to apply our analysis of the two-sample process directly to the three-sample process.

We proceed to state our main result.
Let $G$ be a connected regular $n$-vertex graph and let $\lambda$ be
the second largest absolute eigenvalue of the transition matrix $P=P(G)$ of a random walk on $G$.
Let $A_1$ be the set of vertices with the largest initial  opinion
and $A_2$ the set with the second largest opinion.
If no confusion arises, we also use $A$  to stand for the size of  set $A$.

\begin{theorem}\label{thm:mainResult}
Let $G$ be a regular $n$-vertex graph and
let the initial sizes of the opinions be $A_1,A_2,\ldots,A_k$ in non-increasing order.
Assume that
$A_1-A_2 \geq C n \max\{ \sqrt{(\log n )/A_1}, \lambda\}$,
where $\lambda$ is the absolute second eigenvalue  of $P(G)$ and
$C > 0$ is a suitably large constant.\\
With probability at least
$1- 1/n$, after at most $O((n/A_1) \log(A_1/(A_1 - A_2)) + \log n)$ rounds,
the two-sample voting completes and the final opinion is the largest initial opinion.
\end{theorem}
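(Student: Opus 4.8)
The plan is to run the process through three regimes, comparing the $k$-opinion dynamics with the mean-field recursion on $K_n$ and absorbing the spectral error using the hypothesis $A_1-A_2\ge Cn\lambda$. Write $a_i^{(t)}=A_i^{(t)}/n$, $\rho^{(t)}=(A_1^{(t)}-A_2^{(t)})/A_1^{(t)}\le 1$, $s^{(t)}=\sum_\ell(a_\ell^{(t)})^2$ and $p_i(v)=(P\mathbf{1}_{A_i^{(t)}})(v)$. As the two samples of $v$ are independent, $v$ takes opinion $i$ with probability $p_i(v)^2$ and keeps its current opinion with probability $1-\sum_\ell p_\ell(v)^2$, so $\E[A_i^{(t+1)}\mid\mathcal F_t]=A_i^{(t)}+\sum_v p_i(v)^2-\sum_{v\in A_i^{(t)}}\sum_\ell p_\ell(v)^2$. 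Expanding each $p_\ell$ around $a_\ell^{(t)}$, using the expander estimate $\|P\mathbf{1}_S-(|S|/n)\mathbf{1}\|_2\le\lambda\sqrt{|S|}$ (the spectral bound applied to the part of $\mathbf{1}_S$ orthogonal to $\mathbf{1}$), and collecting cross-terms, gives $\E[A_i^{(t+1)}\mid\mathcal F_t]=A_i^{(t)}(1+a_i^{(t)}-s^{(t)})\pm O(\lambda A_1^{(t)}+\lambda^2 n)$. With $s\le a_1^2+a_2(1-a_1)$ this yields the two drift bounds I use: $\E[A_1^{(t+1)}\mid\mathcal F_t]\ge A_1^{(t)}\bigl(1+(1-a_1)(A_1^{(t)}-A_2^{(t)})/n\bigr)-O(\lambda A_1^{(t)}+\lambda^2 n)$, and for \emph{every} $j\ne 1$, $\E[A_1^{(t+1)}-A_j^{(t+1)}\mid\mathcal F_t]\ge (A_1^{(t)}-A_2^{(t)})\bigl(1+a_1(1-a_1)\bigr)-O(\lambda A_1^{(t)}+\lambda^2 n)$ — here one checks that $x\mapsto(a_1-x)(1-s+a_1+x)$ is nonincreasing on $[0,a_2]$, so the smallest growth of a gap is attained by the current runner-up and the factor is $1+a_1(1-a_1)$, not the weaker $1+(1-a_1)(a_1-a_2)$.

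To pass to pathwise statements, note that given $\mathcal F_t$ each $A_i^{(t+1)}$ is a sum of $n$ independent indicators, hence $\mathrm{Var}\le\E[A_i^{(t+1)}\mid\mathcal F_t]=O(A_1^{(t)})$, and Chernoff gives $|A_i^{(t+1)}-\E[A_i^{(t+1)}\mid\mathcal F_t]|=O(\sqrt{A_1^{(t)}\log n})$ with probability $1-n^{-\alpha}$; a union bound over the at most $n$ opinions and the $\mathrm{poly}(n)$ rounds we run keeps all these estimates valid at once with probability at least $1-1/n$. On this event, while $a_1^{(t)}\le 3/4$, both the spectral error and the fluctuation are at most a small constant times the relevant drift: the $\lambda$-part of the hypothesis, which forces $\lambda\le a_1^{(0)}/C\le a_1^{(t)}/C$, dominates the spectral error, and the $\sqrt{(\log n)/A_1}$-part dominates the fluctuation. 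Consequently, on the good event and by induction on $t$, the configuration satisfies $A_1^{(t+1)}-A_2^{(t+1)}\ge (A_1^{(t)}-A_2^{(t)})\bigl(1+c\,a_1^{(t)}(1-a_1^{(t)})\bigr)$ and $a_1^{(t+1)}\ge a_1^{(t)}\bigl(1+c\,(1-a_1^{(t)})(A_1^{(t)}-A_2^{(t)})/n\bigr)$ for an absolute $c>0$, with $a_1^{(t)}$, $A_1^{(t)}-A_2^{(t)}$ and $\rho^{(t)}$ all nondecreasing, until the first round with $a_1^{(t)}>3/4$.

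The round count follows by tracking $\rho^{(t)}$. Dividing the two recursions and using $\rho\le 1$ shows $\rho^{(t+1)}\ge\rho^{(t)}$ in general, while for $\rho^{(t)}\le 1/2$ and $a_1^{(t)}\le 3/4$ it gives $\rho^{(t+1)}\ge\rho^{(t)}(1+c'a_1^{(t)})\ge\rho^{(t)}(1+c'a_1^{(0)})$; since $\rho^{(0)}=(A_1-A_2)/A_1$, within $O((n/A_1)\log(A_1/(A_1-A_2)))$ rounds we reach either $a_1^{(t)}>3/4$ or $\rho^{(t)}\ge 1/2$. In the latter case $A_1^{(t)}-A_2^{(t)}\ge A_1^{(t)}/2$ from then on, so $a_1^{(t+1)}\ge a_1^{(t)}(1+c''a_1^{(t)})$; thus $1/a_1^{(t)}$ drops by a constant each round and $a_1^{(t)}$ passes $3/4$ within $O(n/A_1)$ further rounds. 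Either way, the largest opinion holds at least $\tfrac34 n$ vertices within $O((n/A_1)\log(A_1/(A_1-A_2)))$ rounds.

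It remains to finish from $a_1^{(t)}\ge 3/4$. Couple the $k$-opinion process with the two-opinion two-sample process obtained by merging opinions $2,\dots,k$ into one minority and using the same two samples at every vertex: a vertex leaves opinion $1$ only when both samples carry the \emph{same} minority opinion, which is strictly rarer than "both samples lie in the merged minority", so an easy induction shows the majority set of the $k$-process always contains that of the two-opinion process. The two-opinion process starts with imbalance at least $n/2$, so (again using $\lambda\le 1/C$ with $C$ large) \cite{cooper2015fast} gives consensus on the majority in $O(\log n)$ further rounds w.h.p., and hence so does the original process; combining the phases gives the claimed $O((n/A_1)\log(A_1/(A_1-A_2))+\log n)$ bound. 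I expect the main obstacle to be the one-step drift estimate of the first paragraph: pushing the expander mixing lemma through the $k$ quadratic terms $\sum_{v\in A_i}\sum_\ell p_\ell(v)^2$ so the error is only $O(\lambda A_1+\lambda^2 n)$ — negligible precisely because $A_1-A_2\ge Cn\lambda$ — and, more delicately, arranging the bound so that the gap to the current runner-up grows at the \emph{uniform} rate $1+\Theta(a_1(1-a_1))$ regardless of which opinion is momentarily second; the cruder rate $1+\Theta((1-a_1)(a_1-a_2))$ would only give $O(n/(A_1-A_2))$ rounds in the first phase, far too slow when $A_1-A_2\ll A_1$.
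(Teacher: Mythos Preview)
Your proposal is correct and follows essentially the same route as the paper: drift bounds for $A_1$ and for $\min_{j\ge 2}(A_1-A_j)$ via the expander mixing lemma with error $O(\lambda A_1+\lambda^2 n)$, Chernoff concentration, iteration until the largest opinion holds a constant fraction of vertices, and a final $O(\log n)$ phase via coupling with the two-opinion process. The only cosmetic difference is that you track $\rho^{(t)}=(A_1^{(t)}-A_2^{(t)})/A_1^{(t)}$ in the iteration while the paper tracks $x(i)=\pi(A_1)$ and $y(i)=\pi(A_1)-\pi(A_2)$ separately (Lemma~\ref{lem:main}); your threshold $3/4$ plays the role of the paper's $2/3$, and for the final phase the paper re-proves (as its Lemma~\ref{lemma:main2}) the result you cite from~\cite{cooper2015fast}.
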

We note the following w.h.p. property of the second eigenvalue $\lambda$ for  random $d$-regular graphs for $d=o(n^{1/2})$.
For $d$ constant
it is a result of Friedman~\cite{F} that
$\lambda \le \gamma/\sqrt{d}$, where $\gamma=2+\epsilon$ for some small $\epsilon >0$.
For $d$ growing with $n$,
the following estimate of $\lambda$ is given in~\cite{BFSU}.
Provided $d =o(n^{1/2})$ there exists constant $\gamma>0$
such that \whp\ $\lambda \le \gamma/\sqrt{d}$. In either case the size separation condition in Theorem \ref{thm:mainResult} is $A_1-A_2 \geq C'n/\sqrt{d}$.

Theorem \ref{thm:mainResult} can be applied to a number of specific scenarios.
Consider, for example, the case where all $k$ opinions are fairly evenly represented,
but with one opinion slightly larger than
the average $n/k$.
More specifically, assume that $A_1 \ge (n/k)(1+\varepsilon)$, for some $0 < \varepsilon \le 1$,
and that $A_2 \le A_1/(1+\varepsilon)$.
Theorem~\ref{thm:mainResult} implies the following corollary for this case.

\begin{sloppy}

\begin{corollary}\label{nxjsbx781}
For $k \le ( (1/C)^2 n /\log n)^{1/3}$ opinions,
if
$A_1 \ge (n/k)(1 + \varepsilon)$,
$A_2 \le A_1/(1 + \varepsilon)$,
and $\lambda \le \varepsilon/(Ck)$,
where $C > 0$ is the constant from Theorem~\ref{thm:mainResult}
and $\varepsilon^{2/3} = k / ( (1/C^2) n /log n)^{1/3} \le 1$.\\
With probability at least
$1- 1/n$,
after at most
$O(k\log n)$
rounds
the two-sample voting completes and the final opinion is the largest initial  opinion.
\end{corollary}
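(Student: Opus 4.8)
The plan is to read the corollary off Theorem~\ref{thm:mainResult}: I will check that its hypotheses on $k$, $A_1$, $A_2$ and $\lambda$ force the size‑separation condition of the theorem, then simplify the round bound. The only quantity that needs care is $\varepsilon$, which is pinned down rather than free. The relation $\varepsilon^{2/3} = k/((1/C^2)n/\log n)^{1/3}$ is equivalent to $\varepsilon^2 = C^2 k^3 (\log n)/n$, that is $\varepsilon = Ck^{3/2}\sqrt{(\log n)/n}$, and the hypothesis $k \le ((1/C)^2 n/\log n)^{1/3}$ is precisely the statement $\varepsilon \le 1$.

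First I would obtain a clean lower bound on $A_1 - A_2$. From $A_2 \le A_1/(1+\varepsilon)$ we have $A_1 - A_2 \ge A_1\varepsilon/(1+\varepsilon)$; combining this with $A_1 \ge (n/k)(1+\varepsilon)$ gives the convenient inequality $A_1 - A_2 \ge (n/k)\varepsilon$, and substituting the value of $\varepsilon$ turns this into $A_1 - A_2 \ge C\sqrt{nk\log n}$. Since $\varepsilon \le 1$ also gives $A_1 - A_2 \ge A_1\varepsilon/2$, we will have $A_1/(A_1-A_2) \le 2/\varepsilon$ available for later use.

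Next I would verify the two branches of the maximum in Theorem~\ref{thm:mainResult}, using only $A_1 \ge n/k$. For the eigenvalue branch, $\lambda \le \varepsilon/(Ck)$ gives $Cn\lambda \le n\varepsilon/k = (n/k)\varepsilon \le A_1 - A_2$. For the statistical branch, $Cn\sqrt{(\log n)/A_1} \le Cn\sqrt{k(\log n)/n} = C\sqrt{nk\log n} \le A_1 - A_2$ by the bound from the previous step. Hence $A_1 - A_2 \ge Cn\max\{\sqrt{(\log n)/A_1},\,\lambda\}$ with the \emph{same} constant $C$ as in the theorem, so Theorem~\ref{thm:mainResult} applies: with probability at least $1-1/n$ the two‑sample voting completes with the largest opinion winning, within $O\big((n/A_1)\log(A_1/(A_1-A_2)) + \log n\big)$ rounds.

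Finally I would collapse this round count to $O(k\log n)$. From $A_1 \ge n/k$ we get $n/A_1 \le k$, and from $A_1/(A_1-A_2) \le 2/\varepsilon = (2/C)k^{-3/2}\sqrt{n/\log n} \le (2/C)\sqrt{n}$ we get $\log(A_1/(A_1-A_2)) = O(\log n)$; multiplying, $(n/A_1)\log(A_1/(A_1-A_2)) = O(k\log n)$, and the additive $\log n$ term is absorbed. The argument is pure bookkeeping; the only subtlety — and it is not really an obstacle — is that the constant $C$ appearing in the eigenvalue hypothesis, in the definition of $\varepsilon$, and in the separation bound of Theorem~\ref{thm:mainResult} must be the same throughout, which is why the inequalities above come out with matching constants rather than merely up to constant factors.
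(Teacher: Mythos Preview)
Your proposal is correct and follows exactly the approach the paper intends: the paper states the corollary as a direct consequence of Theorem~\ref{thm:mainResult} without giving a separate proof, and your verification that the hypotheses imply the size-separation condition $A_1-A_2 \ge Cn\max\{\sqrt{(\log n)/A_1},\lambda\}$ (via $A_1-A_2 \ge (n/k)\varepsilon = C\sqrt{nk\log n}$ and $A_1 \ge n/k$) together with the simplification of the round bound is precisely the bookkeeping needed.
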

\end{sloppy}
In Section~\ref{3to2} we show
that the statements of Theorem~\ref{thm:mainResult} and Corollary~\ref{nxjsbx781}
also hold for the three-sample voting protocol used  by Becchetti et.\ al.~\cite{becchetti2014plurality,becchetti2014plurality_arXiv}.
We note that the bound on the running time in Theorem~\ref{thm:mainResult}
is $O(\log n)$, if $A_1$ is $\Omega(n/\log n)$, provided that
$A_1-A_2$ is also $\Omega(n/\log n)$ and
$\lambda$ is appropriately small.
This improves on the results of~\cite{becchetti2014plurality,becchetti2014plurality_arXiv}
which require $A_1 = \Theta(n)$ for a running time  of $O(\log n)$.

In the $\ell$-extended two-sample voting model, (as introduced in
 \cite{cooper2015fast}) each vertex makes
two independent random walks of length $\ell$ and carries out two-sample
voting using
the opinions on the terminal vertices of these walks.
By sampling  using  random walks of length $\ell$, we replace the transition matrix $P$
used in the proof of Theorem~\ref{thm:mainResult}
by its $\ell$-th power $P^\ell$. If the graph is regular, then the only effect on the proofs is to replace all eigenvalues by their $\ell$-th power.
This reduces the  absolute second eigenvalue
from $\lambda$ to $\lambda^\ell$.
By increasing $\ell$ we can include in our analysis those graphs which do not satisfy the conditions of Theorem~\ref{thm:mainResult} on the relation between $A_1 - A_2$ and $\lambda$.

\begin{theorem}\label{thm:mainResult3}
Let  $\ell$ be a positive integer,
let $G$ be a regular $n$-vertex graph and
let the initial sizes of the opinions be $A_1,A_2,\ldots,A_k$ in non-increasing order.
Assume that $A_1-A_2 \geq C n \max\{ \sqrt{(\log n )/A_1}, \lambda^\ell\}$,
where
$C > 0$ is the constant from Theorem~\ref{thm:mainResult}.
Then
$\ell$-extended two-sample voting completes
after at most $O((n/A_1) \log(A_1/(A_1 - A_2)) + \log n)$ rounds,
with probability at least
$1- 1/n$,
and the final opinion is the largest initial opinion.
\end{theorem}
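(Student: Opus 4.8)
The plan is to deduce Theorem~\ref{thm:mainResult3} from Theorem~\ref{thm:mainResult} by the simple observation that $\ell$-extended two-sample voting \emph{is} ordinary two-sample voting carried out with a different sampling matrix. In each round every vertex $v$ draws two vertices $u_1,u_2$ independently from the $\ell$-step random-walk distribution $P^\ell(v,\cdot)$ and adopts their common opinion if $u_1$ and $u_2$ agree; so the process is governed by the stochastic matrix $Q:=P^\ell$ in precisely the way the process in Theorem~\ref{thm:mainResult} is governed by $P$.

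First I would record the elementary spectral facts about $Q$. Since $G$ is regular, $P=\mathrm{Adj}(G)/d$ is symmetric and doubly stochastic, hence $Q=P^\ell$ is again symmetric and doubly stochastic and thus the transition matrix of a reversible chain with the uniform stationary distribution, exactly like $P$. If $1=\mu_1>\mu_2\geq\cdots\geq\mu_n>-1$ are the eigenvalues of $P$ (here we use that $G$ is connected and non-bipartite) with $\lambda=\max_{i\geq2}|\mu_i|$, then the eigenvalues of $Q$ are $\mu_1^\ell,\ldots,\mu_n^\ell$, so the principal eigenvalue of $Q$ is still the simple eigenvalue $1$ and its second absolute eigenvalue is $\max_{i\geq2}|\mu_i|^\ell=\lambda^\ell$. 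Thus $Q$ inherits every structural property of $P$ used in Theorem~\ref{thm:mainResult}, with $\lambda$ replaced by $\lambda^\ell$.

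Next I would re-read the proof of Theorem~\ref{thm:mainResult} and check that it uses $P$ only through (i) the fact that $P$ is the transition matrix of a reversible chain on an $n$-vertex regular graph with uniform stationary distribution, and (ii) one-round drift, variance and Chernoff-type estimates that enter solely through $n$, the opinion sizes $A_i$, and the second absolute eigenvalue $\lambda$. Granting this, one reruns the identical argument with $P\mapsto Q$ and $\lambda\mapsto\lambda^\ell$; the hypothesis $A_1-A_2\geq Cn\max\{\sqrt{(\log n)/A_1},\lambda^\ell\}$ is exactly the hypothesis of Theorem~\ref{thm:mainResult} for the matrix $Q$, so the conclusion --- convergence to the largest initial opinion within $O((n/A_1)\log(A_1/(A_1-A_2))+\log n)$ rounds with probability at least $1-1/n$ --- follows verbatim. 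The count of rounds is unaffected: one $\ell$-extended round is a single synchronous update, just a more expensive one in terms of communication.

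The main, and only mild, obstacle is making the ``uses $P$ only through (i)--(ii)'' statement precise. Unlike $P$, the matrix $Q=P^\ell$ typically has positive diagonal entries and non-uniform off-diagonal entries, so under $\ell$-extended voting a vertex may sample itself and the sampling probabilities vary across neighbours; one must verify that the drift and variance computations and the union bounds over vertices in the proof of Theorem~\ref{thm:mainResult} are phrased purely in terms of the action of the sampling matrix on opinion-indicator vectors and of its second absolute eigenvalue, and hence tolerate this. Once this bookkeeping is checked, no new probabilistic input is needed.
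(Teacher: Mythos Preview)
Your proposal is correct and follows essentially the same approach as the paper: the authors simply note that replacing $P$ by $P^\ell$ in the proof of Theorem~\ref{thm:mainResult} has the sole effect of replacing all eigenvalues by their $\ell$-th powers, so $\lambda$ becomes $\lambda^\ell$ and the rest of the argument goes through unchanged. Your additional care in spelling out that the one-round estimates depend on $P$ only via reversibility, the uniform stationary distribution, and the expander mixing bounds (hence tolerate self-loops and non-uniform transition weights in $P^\ell$) is a welcome sanity check, but no new idea beyond the paper's is involved.
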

Once again the same results apply to $\ell$-extended three-sample voting.

\section{Preliminary Markov chain results}

In this section we set up some Markov chain foundations and preliminary results which we need for our proof of
Theorem~\ref{thm:mainResult}.
Consider a connected and non-bipartite graph $G = (V,E)$ with $n$ vertices and $m$ edges.
Let $P$ be the transition matrix of a simple random walk  on $G$.
A  random walk on a connected and non-bipartite graph defines
a reversible Markov chain with stationary distribution $\pi(x) = d(x)/(2m)$, where $d(x)$ denotes the degree of vertex $x$.
The reversibility of $P$ means that $\pi(x)P(x,y)=\pi(y)P(y,x)$, for all vertices $x,y$.

Let $1=\lambda_1 >\lambda_2 \geq \ldots \geq \lambda_n > -1$ be the eigenvalues of $P$ and define $\lambda=\lambda(P)$ by $\lambda = \max\{|\lambda_2|, |\lambda_n|\}$.
We also consider the matrix $P^2= P \times P$ (standard matrix product),
which is the transition matrix of the two-step random walk,
is also reversible and has the same stationary distribution and eigenvectors as $P$.
Moreover, the eigenvalues of $P^2$ are the squares of the eigenvalues of $P$.
In particular, $\lambda(P^2)=(\lambda(P))^2$.
Given $A, B\subseteq V$ and $x \in V$,
we define $P(x,A) = \sum_{y \in A} P(x,y)$ and the {\em flow function\/} $Q(A,B)$ from $A$ to $B$ as
\begin{eqnarray}\label{eqn:defQ(A,B)}
Q(A,B) = \sum_{x \in A} \pi(x)P(x,B).
\end{eqnarray}
The value of $Q(A,B)$ is the probability that
one step of the random walk taken from the stationary distribution
is a transition from a vertex in $A$ to a vertex in $B$.
Due to reversibility of $P$, $Q(A,B) = Q(B,A)$.
We will use  the following inequalities, sometimes known as  the
{\em Expander Mixing Lemma for Inhomogeneous Graphs} (see e.g.~\cite{cooper2015fast,Trev}).
Let $A, B \subseteq V$, and $A^c=V \setminus A$, then
\begin{eqnarray}\label{lemma:mixingQAAc}
|Q(A,A^c)-\pi(A)\pi(A^c)| & \leq & \lambda \pi(A)\pi(A^c), \\
\label{lemma:mixingQAB}
|Q(A,B)-\pi(A)\pi(B)| & \leq & \lambda \sqrt{\pi(A)\pi(B)\pi(A^c)\pi(B^c)}.
\end{eqnarray}
We also need  lower bounds for $Q^2$.
\ignore{
\begin{lemma}\label{lemma:Q(A,A)^2}
Let $A \subseteq V$, then~
$Q(A,A)^2 \; \geq \; \pi(A)^4 -2\lambda \pi(A)^2\pi(A^c)^2.$
\end{lemma}
\begin{proof}
Assume $Q(A,A) \leq \pi(A)^2$, otherwise the result is immediate. Observe that $Q(A,A) = \pi(A)-Q(A,A^c)$ which, together with our assumption, implies that $Q(A,A^c) \leq \pi(A)\pi(A^c)$.
Therefore, using~(\ref{lemma:mixingQAAc}), we derive
\begin{eqnarray}
Q(A,A)^2 &=&  \pi(A)^2-2\pi(A)Q(A,A^c)+Q(A,A^c)^2 \nonumber\\
&\geq&\pi(A)^2-2\pi(A)^2\pi(A^c) + (1-\lambda)^2\pi(A)^2\pi(A^c)^2\nonumber\\
&\geq&\pi(A)^2-2\pi(A)^2\pi(A^c) + (1-2\lambda)\pi(A)^2\pi(A^c)^2\nonumber\\
&=& (\pi(A)-\pi(A)\pi(A^c))^2 -2\lambda\pi(A)^2\pi(A^c)^2 = \pi(A)^4 -2\lambda \pi(A)^2\pi(A^c)^2.
\end{eqnarray}
\end{proof}
}

\begin{lemma}\label{lemma:Q(A,B)^2}
For any $A,B \subseteq V$, we have
\begin{equation}\label{jvwe891x}
Q(A,B)^2 \geq (\pi(A)\pi(B))^2 -  2 \lambda (\pi(A)\pi(B))^{3/2} (\pi(A^c)\pi(B^c))^{1/2}.
\end{equation}
\end{lemma}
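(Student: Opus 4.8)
The plan is to mimic the argument sketched (in the commented-out form) for $Q(A,A)^2$, but now using the inhomogeneous expander mixing inequality~(\ref{lemma:mixingQAB}) in place of~(\ref{lemma:mixingQAAc}). It is convenient to abbreviate $a=\pi(A)\pi(B)$ and $b=\lambda\sqrt{\pi(A)\pi(B)\pi(A^c)\pi(B^c)}$, both of which are nonnegative (recall $\lambda=\lambda(P)\ge 0$ and $Q(A,B)\ge 0$ as it is a transition probability). With this notation, inequality~(\ref{lemma:mixingQAB}) reads $|Q(A,B)-a|\le b$, and the quantity to be lower-bounded is exactly $a^2-2ab$, since $2ab=2\lambda(\pi(A)\pi(B))^{3/2}(\pi(A^c)\pi(B^c))^{1/2}$. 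So the whole statement reduces to showing $Q(A,B)^2\ge a^2-2ab$.

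First I would dispose of the trivial case $Q(A,B)\ge a$: then $Q(A,B)^2\ge a^2\ge a^2-2ab$ because $a,b\ge 0$, and there is nothing more to do. Otherwise $0\le Q(A,B)<a$, and~(\ref{lemma:mixingQAB}) gives the one-sided bound $Q(A,B)\ge a-b$. The only point requiring care is that $a-b$ may be negative, so one cannot square both sides of this bound directly; I would therefore split once more. If $a-b\ge 0$, then both $Q(A,B)$ and $a-b$ lie in $[0,\infty)$, where $t\mapsto t^2$ is monotone, so $Q(A,B)^2\ge (a-b)^2=a^2-2ab+b^2\ge a^2-2ab$. If instead $a-b<0$, i.e.\ $b>a$, then $2ab>a^2$, whence $a^2-2ab<0\le Q(A,B)^2$ and the desired inequality holds trivially.

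Combining the cases yields $Q(A,B)^2\ge a^2-2ab$ in all situations, which is the claim. There is essentially no obstacle here beyond this bookkeeping: the sign of $a-b$ must be checked before squaring the mixing bound, and the two-fold case split above handles it cleanly. (If one prefers a slightly slicker presentation, the same split can be phrased as: $Q(A,B)^2\ge(\max\{0,a-b\})^2\ge a^2-2ab$, verifying the last inequality separately according to whether $a\ge b$ or $a<b$.)
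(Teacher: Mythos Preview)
Your proof is correct and uses the same ingredient as the paper, namely the mixing bound~(\ref{lemma:mixingQAB}) in the form $|Q(A,B)-a|\le b$ with $a=\pi(A)\pi(B)$ and $b=\lambda\sqrt{\pi(A)\pi(B)\pi(A^c)\pi(B^c)}$. The only difference is presentational: where you do a two-level case split on the signs of $Q(A,B)-a$ and $a-b$, the paper expands
\[
Q(A,B)^2=(Q(A,B)-a)^2+a^2+2a\bigl(Q(A,B)-a\bigr)\ge a^2-2a\,|Q(A,B)-a|\ge a^2-2ab,
\]
dropping the nonnegative square and applying~(\ref{lemma:mixingQAB}) once. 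This is exactly the ``slicker presentation'' you allude to at the end, and it dispenses with the sign check on $a-b$ altogether.
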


\begin{proof}
\begin{eqnarray}
Q(A,B)^2 &=& ((Q(A,B)-\pi(A)\pi(B))+\pi(A)\pi(B))^2 \nonumber\\
&=& (Q(A,B)-\pi(A)\pi(B))^2+(\pi(A)\pi(B))^2+2\pi(A)\pi(B)(Q(A,B)-\pi(A)\pi(B))\nonumber\\
&\ge& (\pi(A)\pi(B))^2-2\lambda \pi(A)\pi(B) \sqrt{\pi(A)\pi(B)\pi(A^c)\pi(B^c)}.
\end{eqnarray}
The last line follows from \eqref{lemma:mixingQAB}.
\end{proof}
Given $A,B \subseteq V$, define the quantity $R(A,B) = \sum_{x \in A} \pi(x)(P(x,B))^2$.
This quantity is the expected change, in the stationary measure $\pi$,
from $A$ to $B$ in one round of two-sample voting.
\begin{lemma}\label{lemma:RVA = Q2AA}
For any $A \subseteq V$, we have~ $R(V,A) \: = \: Q_2(A,A),$~
where $Q_2$ is the flow function for the two-step transition matrix  $P^2$.
\end{lemma}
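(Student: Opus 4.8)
The plan is to prove the identity by expanding both sides into sums over triples of vertices and matching them term by term using the reversibility of $P$.

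First I would unfold the left-hand side. By definition,
\begin{equation*}
R(V,A) = \sum_{x \in V} \pi(x) \big(P(x,A)\big)^2
= \sum_{x \in V} \pi(x) \sum_{y \in A} \sum_{z \in A} P(x,y)\, P(x,z),
\end{equation*}
so $R(V,A)$ is a sum, over all ordered pairs $(y,z) \in A \times A$ and all $x \in V$, of the quantity $\pi(x) P(x,y) P(x,z)$.

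Next I would unfold the right-hand side. Since $P^2(y,z) = \sum_{x \in V} P(y,x) P(x,z)$, we get
\begin{equation*}
Q_2(A,A) = \sum_{y \in A} \pi(y)\, P^2(y,A)
= \sum_{y \in A} \sum_{z \in A} \sum_{x \in V} \pi(y)\, P(y,x)\, P(x,z).
\end{equation*}
Now apply reversibility in the form $\pi(y) P(y,x) = \pi(x) P(x,y)$ to each summand, which turns $\pi(y) P(y,x) P(x,z)$ into $\pi(x) P(x,y) P(x,z)$. After this substitution the triple sum is exactly the same expression obtained for $R(V,A)$ above (one only reorders the summations, putting the sum over $x$ outside), so the two quantities coincide.

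There is no real obstacle here; the only thing to be careful about is the bookkeeping of the indices and making sure reversibility is applied to the correct factor (the step $P(y,x) \to P(x,z)$ transition, weighted by $\pi(y)$). I would present it as a short three-line computation with the reversibility step flagged explicitly.
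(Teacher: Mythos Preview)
Your proposal is correct and follows essentially the same approach as the paper: both expand $R(V,A)$ and $Q_2(A,A)$ into sums over vertices and use the reversibility identity $\pi(y)P(y,x) = \pi(x)P(x,y)$ as the sole nontrivial step. The only cosmetic difference is that the paper keeps one factor $P(x,A)$ unexpanded (splitting only the other into $\sum_{y\in A} P(x,y)$), whereas you fully expand $(P(x,A))^2$ into a double sum over $y,z\in A$; this changes nothing in substance.
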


\begin{proof}
From definition of $R(V,A)$,
reversibility of $P$ and $P^2(x,y)= \sum_{z \in V}P(x,z)P(z,y)$:
\begin{eqnarray}
R(V,A) &=& \sum_{x \in V} \pi(x)P(x,A)^2 = \sum_{x \in V} \pi(x)P(x,A)\sum_{y \in A}P(x,y) = \sum_{y \in A}\sum_{x \in V} \pi(x)P(x,A)P(x,y)\nonumber \\
&=&\sum_{y \in A}\sum_{x \in V} \pi(y)P(y,x) P(x,A) = \sum_{y \in A} \pi(y)\sum_{x \in V}P(y,x) P(x,A)\nonumber\\
&=& \sum_{y \in A} \pi(y) P^2(y,A) = Q_2(A,A).
\end{eqnarray}
\end{proof}

If $G$ is a complete graph (with node loops), then
$R(V,A) = \pi(A)^2 = (|A|/n)^2$ and
$R(A,B) = \pi(A)\pi(B)^2 = |A|\cdot |B|^2/n^3$.
The next two lemmas give bounds on deviations from these values in regular graphs.

\begin{lemma}\label{lemma:R(V,A)-piA2}
For $A \subseteq V$, we have
\begin{equation}\label{mnd78a}
|R(V,A) -\pi(A)^2| \; = \; |Q_2(A,A^c)-\pi(A)\pi(A^c)| \; \leq \; \lambda^2\pi(A)\pi(A^c).
\end{equation}
\end{lemma}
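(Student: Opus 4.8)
The plan is to reduce everything to the Expander Mixing Lemma inequality~\eqref{lemma:mixingQAAc} applied to the two-step chain. First I would invoke Lemma~\ref{lemma:RVA = Q2AA} to rewrite $R(V,A) = Q_2(A,A)$, so that the quantity on the left becomes $|Q_2(A,A) - \pi(A)^2|$.

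Next I would use the conservation-of-flow identity $Q_2(A,A) + Q_2(A,A^c) = \sum_{x \in A}\pi(x) P^2(x,V) = \pi(A)$, which holds because $P^2$ is stochastic, together with the trivial identity $\pi(A)^2 = \pi(A)(1-\pi(A^c)) = \pi(A) - \pi(A)\pi(A^c)$. Subtracting these gives $Q_2(A,A) - \pi(A)^2 = \pi(A)\pi(A^c) - Q_2(A,A^c)$, which establishes the claimed equality $|R(V,A) - \pi(A)^2| = |Q_2(A,A^c) - \pi(A)\pi(A^c)|$.

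Finally, to bound the right-hand side I would apply~\eqref{lemma:mixingQAAc} with $P^2$ in place of $P$. As recalled in the preliminaries, $P^2$ is reversible, has the same stationary distribution $\pi$, and its eigenvalues are the squares of those of $P$, so $\lambda(P^2) = \lambda^2$; the mixing inequality then yields $|Q_2(A,A^c) - \pi(A)\pi(A^c)| \le \lambda^2 \pi(A)\pi(A^c)$, which is exactly~\eqref{mnd78a}.

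There is no real obstacle here: the argument is pure bookkeeping once Lemma~\ref{lemma:RVA = Q2AA} and the relation $\lambda(P^2)=\lambda^2$ are available. The only points requiring a little care are getting the signs right in the flow identity and observing that $P^2$ genuinely satisfies the hypotheses (reversibility, stationary distribution $\pi$, second absolute eigenvalue $\lambda^2$) under which the Expander Mixing Lemma was stated, both of which the preliminary discussion already guarantees.
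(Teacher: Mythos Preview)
Your proposal is correct and follows essentially the same argument as the paper: invoke Lemma~\ref{lemma:RVA = Q2AA} to write $R(V,A)=Q_2(A,A)$, use $Q_2(A,A)=\pi(A)-Q_2(A,A^c)$ together with $\pi(A)^2=\pi(A)-\pi(A)\pi(A^c)$ to obtain the stated equality, and then apply~\eqref{lemma:mixingQAAc} to $P^2$ with $\lambda(P^2)=\lambda^2$ for the inequality.
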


\begin{proof}
By Lemma~\ref{lemma:RVA = Q2AA}, $R(V,A) = Q_2(A,A)$, and
standard manipulations give $Q_2(A,A) = Q_2(A,V)-Q_2(A,A^c) = \pi(A)-Q_2(A,A^c)$, so
$$R(V,A) - \pi(A)^2 = \pi(A)-Q_2(A,A^c)-\pi(A)^2 = \pi(A)\pi(A^c)-Q_2(A,A^c).$$
Taking the absolute value of both sides gives the first equality in~\eqref{mnd78a}.
To obtain the inequality, apply~\eqref{lemma:mixingQAAc} to $P^2$, $Q_2$ and
$\lambda^2$ as the second largest absolute eigenvalue of $P^2$.
\end{proof}

\ignore{
\begin{lemma}\label{lemma:R(A,A)>}
Let $A\subseteq V$, then
$$R(A,A) \geq \frac{Q(A,A)^2}{\pi(A)} \geq \pi(A)^3-2\lambda\pi(A)\pi(A^c)^2.$$
\end{lemma}
\begin{proof}
The second inequality is from Lemma~\ref{lemma:Q(A,A)^2}.
From convexity of the function $z \mapsto z^2$,
\begin{equation}\label{jnvjkd0sq}
R(A,A) = \sum_{x \in A} \pi(x)(P(x,A))^2 \ge \pi(A) \left( \sum_{x \in A} \frac{\pi(x)}{\pi(A)} P(x,A)\right)^2 = \frac{1}{\pi(A)} (Q(A,A))^2.
\end{equation}
\end{proof}
}

\begin{lemma}\label{lemma:R(A,B)>}
Let $A,B \subseteq V$, then
\[ R(A,B) \; \geq \; \frac{Q(A,B)^2}{\pi(A)} \geq \pi(A)\pi(B)^2
                        - 2\lambda \pi(A)^{1/2}\pi(B)^{3/2}\pi(A^c)^{1/2}\pi(B^c)^{1/2}.
\]
\end{lemma}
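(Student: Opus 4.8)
The plan is to establish the two inequalities in turn, each being a short consequence of material already developed. For the first bound, $R(A,B) \ge Q(A,B)^2/\pi(A)$, I would mimic the (omitted) argument for the analogous statement about $R(A,A)$: recall $R(A,B) = \sum_{x \in A} \pi(x)(P(x,B))^2$, and view $\{\pi(x)/\pi(A)\}_{x \in A}$ as a probability distribution on $A$. Convexity of $z \mapsto z^2$ (Jensen's inequality) then gives
\[
R(A,B) \;=\; \pi(A) \sum_{x \in A} \frac{\pi(x)}{\pi(A)} (P(x,B))^2 \;\ge\; \pi(A) \left( \sum_{x \in A} \frac{\pi(x)}{\pi(A)} P(x,B) \right)^2 \;=\; \frac{Q(A,B)^2}{\pi(A)},
\]
where the last equality is the definition of $Q(A,B) = \sum_{x \in A} \pi(x) P(x,B)$. (If $\pi(A) = 0$ the claimed inequalities hold vacuously, so we may assume $\pi(A) > 0$.)

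For the second bound I would simply insert the lower estimate on $Q(A,B)^2$ supplied by Lemma~\ref{lemma:Q(A,B)^2}, namely
\[
Q(A,B)^2 \;\ge\; (\pi(A)\pi(B))^2 - 2\lambda (\pi(A)\pi(B))^{3/2}(\pi(A^c)\pi(B^c))^{1/2},
\]
and divide through by $\pi(A)$, which produces exactly $\pi(A)\pi(B)^2 - 2\lambda \pi(A)^{1/2}\pi(B)^{3/2}\pi(A^c)^{1/2}\pi(B^c)^{1/2}$, completing the chain.

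I do not expect any real obstacle here; the proof is two lines. The only points warranting care are that the convexity step requires $\pi(A) > 0$ (dispatched by the vacuous case) and that Lemma~\ref{lemma:Q(A,B)^2}, which in turn rests on the inhomogeneous Expander Mixing Lemma~\eqref{lemma:mixingQAB}, is being invoked for the one-step chain $P$ with its second eigenvalue $\lambda$ — so that a single factor of $\lambda$ (not $\lambda^2$) appears, consistent with the statement.
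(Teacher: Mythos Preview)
Your proposal is correct and matches the paper's proof essentially line for line: the paper also obtains the first inequality by convexity of $z\mapsto z^2$ applied with the weights $\pi(x)/\pi(A)$ on $A$, and then derives the second inequality directly from Lemma~\ref{lemma:Q(A,B)^2}. Your extra remarks about the vacuous case $\pi(A)=0$ and the single factor of $\lambda$ are fine but not needed.
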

\begin{proof}
The second inequality is from Lemma~\ref{lemma:Q(A,B)^2}.
From convexity of the function $z \mapsto z^2$,
\begin{equation}\label{jnvjkd0sq}
R(A,B) =\pi(A)\sum_{x \in A} \frac{\pi(x)}{\pi(A)}(P(x,B))^2
\ge \pi(A) \left( \sum_{x \in A} \frac{\pi(x)}{\pi(A)} P(x,B)\right)^2 = \frac{1}{\pi(A)} Q(A,B)^2.
\end{equation}
\end{proof}

Suppose the family of sets $\mathcal C = (A_1, \ldots, A_k)$ is a partitioning of $V$. Define the quantity $S_{\mathcal C}(A) = \sum_{i = 1}^k R(A,A_i)$.
For a complete graph, $S_{\mathcal C}(A) = \sum_{i = 1}^k \pi(A_i)^2$ and the following
lemma bounds the deviation from this value in regular graphs.

\begin{lemma}\label{lemma:S(V)}
Consider a partitioning $\mathcal C = (A_1, \ldots, A_k)$ of $V$. Then
$$\left|S_{\mathcal C}(V)- \sum_{i=1}^k\pi(A_i)^2 \right|
   \leq \lambda^2\left(1-\sum_{i=1}^k \pi(A_i)^2\right).$$
\end{lemma}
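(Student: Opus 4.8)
The plan is to reduce the statement to a term-by-term application of Lemma~\ref{lemma:R(V,A)-piA2}. By definition $S_{\mathcal C}(V) = \sum_{i=1}^k R(V,A_i)$, so the triangle inequality immediately gives
$$\left| S_{\mathcal C}(V) - \sum_{i=1}^k \pi(A_i)^2 \right| \;\le\; \sum_{i=1}^k \left| R(V,A_i) - \pi(A_i)^2 \right|.$$
(Equivalently, one could first rewrite $R(V,A_i)=Q_2(A_i,A_i)$ via Lemma~\ref{lemma:RVA = Q2AA} and expand $\sum_i Q_2(A_i,A_i)=1-\sum_i Q_2(A_i,A_i^c)$, but the triangle-inequality route needs no bookkeeping.)

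Next I would invoke Lemma~\ref{lemma:R(V,A)-piA2} for each block $A_i$, which yields $|R(V,A_i)-\pi(A_i)^2|\le \lambda^2\pi(A_i)\pi(A_i^c)=\lambda^2\pi(A_i)(1-\pi(A_i))$. Summing over $i$ and using that $\mathcal C$ is a partition of $V$, so $\sum_{i=1}^k\pi(A_i)=1$, gives
$$\sum_{i=1}^k \lambda^2\pi(A_i)(1-\pi(A_i)) \;=\; \lambda^2\Bigl(\sum_{i=1}^k\pi(A_i)-\sum_{i=1}^k\pi(A_i)^2\Bigr) \;=\; \lambda^2\Bigl(1-\sum_{i=1}^k\pi(A_i)^2\Bigr),$$
which is precisely the claimed bound.

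There is essentially no obstacle here; the only point worth noting is that the crude per-block estimate is already sharp enough, because the partition constraint $\sum_i\pi(A_i)=1$ collapses $\sum_i\pi(A_i)\pi(A_i^c)$ to exactly $1-\sum_i\pi(A_i)^2$, so nothing is lost in the triangle-inequality step. If one wanted a marginally sharper inequality one could instead expand $S_{\mathcal C}(V)-\sum_i\pi(A_i)^2 = -\sum_{i\ne j}\bigl(Q_2(A_i,A_j)-\pi(A_i)\pi(A_j)\bigr)$ and apply the Expander Mixing Lemma \eqref{lemma:mixingQAB} to $P^2$ (whose second absolute eigenvalue is $\lambda^2$), but this refinement is not needed for the stated result.
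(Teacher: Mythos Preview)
Your proof is correct and is essentially identical to the paper's own argument: expand $S_{\mathcal C}(V)=\sum_i R(V,A_i)$, apply the triangle inequality, bound each summand via Lemma~\ref{lemma:R(V,A)-piA2}, and simplify $\sum_i \pi(A_i)\pi(A_i^c)=1-\sum_i\pi(A_i)^2$ using that $\mathcal C$ is a partition. The extra commentary about the alternative $Q_2$ expansion is not in the paper but is harmless.
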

\begin{proof}
Using Lemma~\ref{lemma:R(V,A)-piA2}, we get
\begin{eqnarray}
\left|S_{\mathcal C}(V)- \sum_{i=1}^k\pi(A_i)^2 \right|
    &=& \left|\sum_{i=1}^k R(V,A_i)-\pi(A_i)^2\right|
    \; \leq \; \sum_{i=1}^k \left|R(V,A_i)-\pi(A_i)^2\right| \nonumber\\
    &\leq& \sum_{i=1}^k \lambda^2\pi(A_i)\pi(A_i^c)
    \; = \; \lambda^2\left(1-\sum_{i=1}^k \pi(A_i)^2\right).\nonumber
\end{eqnarray}
\end{proof}

\begin{lemma}\label{lemma:BoundsS_c(A)}
Let $\mathcal C = (A_1, \ldots, A_k)$ be a partitioning of $V$.
For any $A \subseteq V$,
\begin{eqnarray}
S_{\mathcal C}(A) & \geq & \pi(A)\sum_{i=1}^k \pi(A_i)^2
           - 2\lambda \pi(A)^{1/2}\pi(A^c)^{1/2}\sum_{i=1}^k \pi(A_i)^{3/2},  \label{bgvhdv788cb1} \\
S_{\mathcal C}(A) & \leq & \pi(A)\sum_{i=1}^k \pi(A_i)^2
      +2\lambda \pi(A)^{1/2}\pi(A^c)^{1/2}\sum_{i=1}^k \pi(A_i)^{3/2}
      +\lambda^2. \nonumber
\end{eqnarray}
\end{lemma}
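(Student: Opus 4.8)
The plan is to bound $S_{\mathcal C}(A) = \sum_{i=1}^k R(A,A_i)$ term by term using the two-sided estimate on $R(A,B)$. For the lower bound I would simply sum the inequality of Lemma~\ref{lemma:R(A,B)>} over $i = 1,\ldots,k$, which directly yields
\[
S_{\mathcal C}(A) \; \ge \; \pi(A)\sum_{i=1}^k \pi(A_i)^2 - 2\lambda \pi(A)^{1/2}\pi(A^c)^{1/2}\sum_{i=1}^k \pi(A_i)^{3/2}\pi(A_i^c)^{1/2},
\]
and then discard the harmless factors $\pi(A_i^c)^{1/2} \le 1$ in the error term to obtain exactly \eqref{bgvhdv788cb1}. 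No further work is needed for the lower bound.

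For the upper bound I would first establish a matching upper estimate on a single $R(A,B)$. Writing $Q(A,B) = \pi(A)\pi(B) + \big(Q(A,B)-\pi(A)\pi(B)\big)$ and squaring as in the proof of Lemma~\ref{lemma:Q(A,B)^2} gives $Q(A,B)^2 \le (\pi(A)\pi(B))^2 + 2\lambda(\pi(A)\pi(B))^{3/2}(\pi(A^c)\pi(B^c))^{1/2} + \lambda^2 \pi(A)\pi(B)\pi(A^c)\pi(B^c)$. However $R(A,B)$ is \emph{not} bounded above by $Q(A,B)^2/\pi(A)$ (the convexity step in Lemma~\ref{lemma:R(A,B)>} goes the wrong way for an upper bound), so instead I would bound $R(A,B)$ directly. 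Using $P(x,B) \le 1$ crudely gives $R(A,B) = \sum_{x\in A}\pi(x)P(x,B)^2 \le \sum_{x\in A}\pi(x)P(x,B) = Q(A,B)$; more usefully, write $R(A,B) \le R(V,B) = Q_2(B,B)$ by Lemma~\ref{lemma:RVA = Q2AA}, but that loses the factor $\pi(A)$. The cleanest route is: $R(A,B) = \sum_{x\in A}\pi(x)P(x,B)^2$, and since $P(x,B)\le 1$, we have $P(x,B)^2 \le P(x,B)$, but to keep the leading term $\pi(A)\pi(B)^2$ we instead note $R(A,B) - \pi(A)\pi(B)^2 = \sum_{x\in A}\pi(x)\big(P(x,B)^2 - \pi(B)^2\big) = \sum_{x\in A}\pi(x)\big(P(x,B)-\pi(B)\big)\big(P(x,B)+\pi(B)\big)$ and bound $P(x,B)+\pi(B) \le 1 + \pi(B) \le 2$, reducing matters to controlling $\sum_{x\in A}\pi(x)|P(x,B)-\pi(B)| = \sum_{x\in A}\pi(x)|P(x,B)-\pi(B)|$, which by Cauchy--Schwarz is at most $\pi(A)^{1/2}\big(\sum_{x\in A}\pi(x)(P(x,B)-\pi(B))^2\big)^{1/2}$; the inner sum is a standard expander-mixing quantity bounded by $\lambda^2\pi(B)\pi(B^c)$ (it equals $R(A,B) - 2\pi(B)Q(A,B) + \pi(A)\pi(B)^2$, or can be bounded via the spectral decomposition of $P$ restricted to $\mathbf 1_B$). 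This yields $|R(A,B)-\pi(A)\pi(B)^2| \le 2\lambda\pi(A)^{1/2}\pi(B)^{1/2}\pi(B^c)^{1/2}\pi(B)^{1/2}$-type terms; summing over $i$ and collecting the lower-order $\lambda^2$ contributions into a single additive $\lambda^2$ (using $\sum_i \pi(A_i)\pi(A_i^c) \le 1$) gives the stated upper bound.

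I expect the main obstacle to be the upper bound on a single $R(A,B)$: unlike the lower bound, it cannot be routed through $Q(A,B)^2/\pi(A)$, so one must argue directly with the expander mixing lemma and a Cauchy--Schwarz step, being careful that the error term carries the right power $\pi(A)^{1/2}$ (rather than $\pi(A)$) so that summing over the partition produces $\sum_i \pi(A_i)^{3/2}$ and a clean $+\lambda^2$ remainder. Everything else — summation over $i$, discarding $\pi(A_i^c)^{1/2}\le 1$ factors, and consolidating the quadratic-in-$\lambda$ terms — is routine bookkeeping.
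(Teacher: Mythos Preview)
Your lower bound is exactly the paper's: sum Lemma~\ref{lemma:R(A,B)>} over $i$ and drop the factors $\pi(A_i^c)^{1/2}\le 1$. That part is fine.

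The upper bound, however, does not go through as you describe. Following your Cauchy--Schwarz route, one gets
\[
\bigl|R(A,A_i)-\pi(A)\pi(A_i)^2\bigr|
 \;\le\; 2\,\pi(A)^{1/2}\left(\sum_{x\in V}\pi(x)\bigl(P(x,A_i)-\pi(A_i)\bigr)^2\right)^{1/2}
 \;\le\; 2\lambda\,\pi(A)^{1/2}\pi(A_i)^{1/2}\pi(A_i^c)^{1/2},
\]
since the inner variance equals $R(V,A_i)-\pi(A_i)^2\le \lambda^2\pi(A_i)\pi(A_i^c)$ by Lemma~\ref{lemma:R(V,A)-piA2}. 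Summed over $i$ this gives an error term $2\lambda\,\pi(A)^{1/2}\sum_i \pi(A_i)^{1/2}\pi(A_i^c)^{1/2}$, which is \emph{not} the claimed $2\lambda\,\pi(A)^{1/2}\pi(A^c)^{1/2}\sum_i \pi(A_i)^{3/2}+\lambda^2$: you have $\pi(A_i)^{1/2}$ where the lemma has $\pi(A_i)^{3/2}$, and the crucial factor $\pi(A^c)^{1/2}$ is absent. (The extra ``$\pi(B)^{1/2}$'' in your sketch is not supported by the computation.) Since $\sum_i \pi(A_i)^{1/2}$ can be as large as $\sqrt{k}$, your bound is genuinely weaker and does not recover the statement; there is no routine bookkeeping that fixes this.

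The idea you are missing is complementation. The paper does not bound $R(A,A_i)$ from above at all; instead it uses the identity $S_{\mathcal C}(A)=S_{\mathcal C}(V)-S_{\mathcal C}(A^c)$, bounds $S_{\mathcal C}(V)$ from above via Lemma~\ref{lemma:S(V)}, and bounds $S_{\mathcal C}(A^c)$ from below by applying the already-proven inequality~\eqref{bgvhdv788cb1} with $A$ replaced by $A^c$. This is what makes the factor $\pi(A^c)^{1/2}\pi(A)^{1/2}$ and the exponent $3/2$ on $\pi(A_i)$ reappear symmetrically in the upper bound, with the residual $\lambda^2\bigl(1-\sum_i\pi(A_i)^2\bigr)\le\lambda^2$ coming from Lemma~\ref{lemma:S(V)}.
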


\begin{proof}
Lemma~\ref{lemma:R(A,B)>} gives the first part:
\[
S_{\mathcal C}(A) \; = \; \sum_{i=1}^k R(A,A_i) \geq \pi(A)\sum_{i=1}^k \pi(A_i)^2
     - 2\lambda \pi(A)^{1/2}\pi(A^c)^{1/2}\sum_{i=1}^k \pi(A_i)^{3/2}.
\]
For the second part,
observe that $S_{\mathcal C}(A) + S_{\mathcal C}(A^c) = S_{\mathcal C}(V)$ and use Lemma~\ref{lemma:S(V)}
and~\eqref{bgvhdv788cb1}:
\begin{eqnarray}
\lefteqn{S_{\mathcal C}(A) = S_{\mathcal C}(V)-S_{\mathcal C}(A^c)} \nonumber\\
&\leq& \sum_{i=1}^k \pi(A_i)^2
         + \lambda^2\left(1-\sum_{i=1}^k\pi(A_i)^2\right)
         - \pi(A^c)\sum_{i=1}^k\pi(A_i)^2
        +2\lambda \pi(A^c)^{1/2}\pi(A)^{1/2}\sum_{i=1}^k \pi(A_i)^{3/2}\nonumber\\
&=& \pi(A)\sum_{i=1}^k \pi(A_i)^2
       +2\lambda \pi(A)^{1/2}\pi(A^c)^{1/2}\sum_{i=1}^k \pi(A_i)^{3/2}
       +\lambda^2\left(1-\sum_{i=1}^k\pi(A_i)^2\right).\nonumber
\end{eqnarray}
\end{proof}

\section{Proof of Theorem~\ref{thm:mainResult}}

From now on we assume the graph is $d$-regular, so $\pi(x) = 1/n$, and for $A \subseteq V$, $ \pi(A)=|A|/n$.
Furthermore, $n R(A,B)= \sum_{x \in A} (d_B(x)/d)^2$ is the expected number of
vertices in $A$ which pick two opinions in $B$.
When  clear from the context, we use $A$ instead of $|A|$ for the size of $A$.

Let $A_j$
be the set of vertices with opinion $j$. At any step, the opinions are ordered according to their sizes:
$A_1 \geq A_2 \geq \ldots \geq A_k$.
Thus ${\mathcal C} = \{A_1,\ldots, A_k\}$ is a partition of $V$.
Let $A_j'$ be the set of vertices with opinion $j$ after one round.
We have the following equality, where the second term in~\eqref{bckdcb90q} is the expected number of
vertices changing their opinion to $A_j$ and the third term is the expected number of vertices
changing their opinion from $A_j$.
\begin{eqnarray}
\E(\pi(A_j')|\mathcal C) & = & \pi(A_j) + R(V\setminus A_j ,A_j) - \sum_{i\neq j} R(A_j,A_i) \label{bckdcb90q} \\
& = & \pi(A_j) + R(V,A_j) - R(A_j ,A_j)- \sum_{i\neq j} R(A_j,A_i) \nonumber \\
& = & \pi(A_j) +R(V,A_j)-S_{\mathcal C}(A_j).\label{eqn:ExpectedDiff}
\end{eqnarray}

The next lemma shows that, given a sufficient advantage of opinion $1$,
after one round of voting opinion $1$ remains the largest opinion.
More precisely, the lemma gives lower bounds
on the increase of the size of opinion $1$ and on the increase of the advantage of this opinion
over the other opinions.

\begin{sloppy}

\begin{lemma}\label{lem:main-oneStep}
Assume $A_1 \le 2n/3$, $A_1-A_2 \geq C n \sqrt{(\log n )/A_1}$
(requiring $A_1 \ge C^{2/3} n^{2/3} \log^{1/3} n$),
where $C = 240 \sqrt{2}$, and $\lambda \leq {(A_1-A_2)}/{(32n)}$.
Then with probability at least
$1- 1/n^2$,
\begin{eqnarray}
A_1' & \ge &  A_1\brac{1+\frac{A_1-A_2}{5n}}. \label{xi}\\
\min_{2 \le j \le k} \left\{A_1'-A_j'\right\} &\geq& (A_1 - A_2)\brac{1+\frac{A_1}{10n}},\label{eqn:Delta}
\end{eqnarray}
\end{lemma}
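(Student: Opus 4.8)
The plan is to establish the two bounds \eqref{xi} and \eqref{eqn:Delta} by first proving, conditionally on the partition $\mathcal C$, one-step first-moment estimates that are quantitatively stronger than the targets, and then upgrading them to high-probability statements by concentration: given $\mathcal C$, the new opinion of each vertex $v$ depends only on $v$'s own two samples, so $A_1'$ and $A_1'-A_j'$ are sums of $n$ independent bounded random variables.

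For the first moment I use \eqref{eqn:ExpectedDiff}. Write $\Sigma=\sum_i\pi(A_i)^2$ and $\delta=\pi(A_1)-\pi(A_2)$. Lemma~\ref{lemma:R(V,A)-piA2} gives $R(V,A_1)\ge\pi(A_1)^2-\lambda^2\pi(A_1)$, and Lemma~\ref{lemma:BoundsS_c(A)} together with $\sum_i\pi(A_i)^{3/2}\le\pi(A_1)^{1/2}$ gives $S_{\mathcal C}(A_1)\le\pi(A_1)\Sigma+2\lambda\pi(A_1)+\lambda^2$, so
\[
\E(\pi(A_1')\mid\mathcal C)\ \ge\ \pi(A_1)+\pi(A_1)\bigl(\pi(A_1)-\Sigma\bigr)-2\lambda\pi(A_1)-2\lambda^2 .
\]
Since $\pi(A_1)-\Sigma=\sum_{i\ge 2}\pi(A_i)\bigl(\pi(A_1)-\pi(A_i)\bigr)\ge(1-\pi(A_1))\delta\ge\delta/3$ (using $\pi(A_1)\le 2/3$), and since $\lambda\le\delta/32\le\pi(A_1)/32$ makes the two $\lambda$-terms smaller than a $\tfrac1{15}\pi(A_1)\delta$ slack, this yields $\E(A_1'\mid\mathcal C)\ge A_1\bigl(1+\tfrac{4}{15}\tfrac{A_1-A_2}{n}\bigr)$, exceeding the target in \eqref{xi} by an additive $\tfrac1{15}\tfrac{A_1(A_1-A_2)}{n}$.

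For the advantage I apply \eqref{eqn:ExpectedDiff} to both $A_1'$ and $A_j'$ and subtract; using the lower bound on $R(V,A_1)-R(V,A_j)$ and the upper bound on $S_{\mathcal C}(A_1)-S_{\mathcal C}(A_j)$ from Lemmas~\ref{lemma:R(V,A)-piA2}--\ref{lemma:BoundsS_c(A)}, the ``complete-graph'' main term is $(\pi(A_1)-\pi(A_j))\bigl(1+\pi(A_1)+\pi(A_j)-\Sigma\bigr)$, with an error at most $4\lambda\pi(A_1)+3\lambda^2$ (bounded uniformly in $j$ via $\pi(A_j)\le\pi(A_1)$). The key point is that $c\mapsto(\pi(A_1)-c)\bigl(1+\pi(A_1)+c-\Sigma\bigr)$ has derivative $\Sigma-1-2c<0$ on $[0,\pi(A_1)]$ because $\Sigma\le\pi(A_1)\le 2/3<1$; hence over $j\ge 2$ the main term is minimised at $j=2$, and no case analysis on the size of $A_j$ is needed. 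For $j=2$, bounding $\Sigma\le\pi(A_1)^2+\pi(A_2)(1-\pi(A_1))$ gives $1+\pi(A_1)+\pi(A_2)-\Sigma\ge 1+\pi(A_1)(1-\delta)\ge 1+\pi(A_1)/3$, and absorbing the $\lambda$-error (again using $\lambda\le\delta/32$) gives $\E(A_1'-A_j'\mid\mathcal C)\ge(A_1-A_2)\bigl(1+\tfrac{4}{21}\tfrac{A_1}{n}\bigr)$ for every $j\ge 2$, exceeding the target in \eqref{eqn:Delta} by $\Omega\bigl(\tfrac{A_1(A_1-A_2)}{n}\bigr)$.

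Finally, conditionally on $\mathcal C$ the sum $A_1'=\sum_v\ind{v\to 1}$ has independent terms in $[0,1]$ and $A_1'-A_j'=\sum_v\bigl(\ind{v\to 1}-\ind{v\to j}\bigr)$ has independent terms in $[-1,1]$, with sum of variances at most $\E A_1'+\E A_j'=O(A_1)$ (since $\E A_i'=O(A_i)$). Bernstein's inequality then bounds the probability of falling $t$ below the mean by $\exp\bigl(-\Omega(t^2/A_1)\bigr)$ for $t=O(A_1)$. Taking $t$ equal to the slacks found above, $t=\Theta\bigl(\tfrac{A_1(A_1-A_2)}{n}\bigr)\ge\Theta\bigl(C\sqrt{A_1\log n}\bigr)$ by the hypothesis $A_1-A_2\ge Cn\sqrt{(\log n)/A_1}$, so each failure probability is at most $n^{-3}$ once $C$ is a large enough absolute constant, and a union bound over the single event for $A_1'$ and the at most $k-1<n$ events for the $A_j'$ gives probability $1-1/n^2$. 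The only real obstacle is the bookkeeping: keeping every $\lambda$-error and every Chernoff slack small enough that the explicit constants $\tfrac15$, $\tfrac1{10}$ and $C=240\sqrt2$ come out, and in particular using a Bernstein/multiplicative-Chernoff bound rather than Hoeffding, since $A_1$ may be as small as $\Theta(n^{2/3}\log^{1/3}n)$ and a range-based bound would be far too weak.
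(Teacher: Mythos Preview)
Your proof is correct and follows essentially the same route as the paper: the first-moment bounds are obtained from \eqref{eqn:ExpectedDiff} via Lemmas~\ref{lemma:R(V,A)-piA2} and~\ref{lemma:BoundsS_c(A)} exactly as in the paper, and the observation that the worst $j\ge 2$ is $j=2$ is the same monotonicity used in the paper (there phrased as ``the right-hand side of~\eqref{nk7r} is non-increasing in $j$''). The only notable difference is in the concentration step: the paper bounds $A_1'$ from below and each $A_j'$ from above by two separate Chernoff applications, splitting into cases $\varepsilon_j\le 1$ and $\varepsilon_j>1$, whereas you apply Bernstein directly to $A_1'-A_j'=\sum_v(\ind{v\to 1}-\ind{v\to j})$; your variant is slightly cleaner since it avoids that case split, but the substance is the same. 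One small wording fix: the parenthetical ``since $\E A_i'=O(A_i)$'' is not quite right for small $A_j$ (the $\lambda$-error can dominate $\pi(A_j)$), but what you actually need and use, $\E A_1'+\E A_j'=O(A_1)$, does follow from the upper bound~\eqref{nk7r}.
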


\end{sloppy}

%
%

\begin{proof}
Several times in this proof we use that $\pi(A_1)\leq 2/3$,
which implies that $\pi(A_1^c) \geq 1/3$.
Our proof uses concepts from Bechetti et.~al.~\cite{becchetti2014plurality,becchetti2014plurality_arXiv} and makes
extensive use of  the following Chernoff bounds.
If $X$ is the sum of independent Bernoulli random variables, then for $\varepsilon \in (0,1)$ and $\delta \ge 1$,
\begin{eqnarray}
\Prob(X \geq (1+\varepsilon)\E(X)), \; \Prob(X \leq (1-\varepsilon)\E(X))
  & \leq & \exp(-\varepsilon^2\E(X)/3), \label{eqn:k33dz} \\
\Prob(X \geq (1+\delta)\E(X))
  & \leq & \exp(-\delta\E(X)/3). \label{eqn:k99dz}
\end{eqnarray}
From Equation~\eqref{eqn:ExpectedDiff} and Lemmas~\ref{lemma:R(V,A)-piA2}
and~\ref{lemma:BoundsS_c(A)},
we have
the following lower and upper bounds on $\E(\pi(A_j')|\mathcal C)$
for any $j\in [k]$.
\begin{eqnarray}
\lefteqn{\E(\pi(A_j')|\mathcal C) \; = \; \pi(A_j) + R(V,A_j)-S_{\mathcal C}(A_j)} \nonumber\\
&\geq& \pi(A_j) +\pi(A_j)^2-\lambda^2\pi(A_j)\pi(A_j^c) \nonumber\\
&&- \pi(A_j)\sum_{i=1}^k \pi(A_i)^2-2\lambda\pi(A_j)^{1/2}\sum_{i=1}^k\pi(A_i)^{3/2}
      - \lambda^2\nonumber\\
&\ge &\pi(A_j)\left(1+\pi(A_j)-\sum_{i=1}^k\pi(A_i)^2 \right)
   - 2\lambda\pi(A_j)^{1/2}\pi(A_1)^{1/2}-(5/4)\lambda^2. \label{eqn:lowerEA_j'}
\end{eqnarray}

\begin{eqnarray}
\lefteqn{\E(\pi(A_j')|\mathcal C) \; = \; \pi(A_j) + R(V,A_j)-S_{\mathcal C}(A_j)} \nonumber\\
&\leq& \pi(A_j) +\pi(A_j)^2 + \lambda^2\pi(A_j)\pi(A_j^c)
- \pi(A_j)\sum_{i=1}^k \pi(A_i)^2 + 2\lambda\pi(A_j)^{1/2}\sum_{i=1}^k\pi(A_i)^{3/2} \nonumber\\
&\leq&\pi(A_j)\brac{1 + \pi(A_j)-\sum_{i=1}^k\pi(A_i)^2}
            +(1/4)\lambda^2 +2\lambda\pi(A_j)^{1/2}\pi(A_1)^{1/2}. \label{nk7r}
\end{eqnarray}
%
By assumption, $\lambda \le \pi(A_1)/32$ and $\pi(A_1) \le 2/3$,
so~\eqref{eqn:lowerEA_j'} and~\eqref{nk7r} imply
\begin{equation}\label{eqn:kdoow}
\pi(A_1)/2 \; \le \; \E(\pi(A_1')|\mathcal C) \; \leq \; 2 \pi(A_1).
\end{equation}
Define $\varepsilon_1 = \sqrt{\frac{9\log n}{\E(A_1'|\mathcal C)}}
\leq \sqrt{\frac{18\log n}{A_1}} <1$.
Therefore, using the Chernoff bounds~\eqref{eqn:k33dz}, we get
\begin{equation}\label{eqn::concentrationA_1'}
   \Prob(A_1'\leq (1-\varepsilon_1)\E(A_1'|\mathcal C) |\mathcal C) \leq e^{-3\log(n)} = n^{-3}.
\end{equation}
For a fixed $j$, $2 \le j \le k$,
define $\varepsilon_j = {\sqrt{9(\log n)\E(A_1'|\mathcal C)}}/{\E(A_j'|\mathcal C)}$. We show
that
\begin{eqnarray}\label{eqn:concentrationA_2'}
\Prob(A_j'\geq (1+\varepsilon_j)\E(A_j'|\mathcal C) |\mathcal C)
     \leq n^{-3}.
\end{eqnarray}
Indeed,
if $\varepsilon_j \le 1$, then the Chernoff bounds~\eqref{eqn:k33dz} give
\[
\Prob(A_j'\geq (1+\varepsilon_j)\E(A_j'|\mathcal C) |\mathcal C)
     \leq e^{-3(\log n)\E(A_1'|\mathcal C) / \E(A_j'|\mathcal C)}
    \leq e^{-3\log(n)}
    = n^{-3}.
\]
If $\varepsilon_j > 1$, then the Chernoff bound~\eqref{eqn:k99dz} gives
\[
\Prob(A_j'\geq (1+\varepsilon_j)\E(A_j'|\mathcal C) |\mathcal C)
     \leq e^{-\sqrt{(\log n)\E(A_1'|\mathcal C)}}
     \leq e^{-\sqrt{A_1}}
     \leq e^{-(Cn)^{1/3}}
    \leq  n^{-3}.
\]
%

The bounds~\eqref{eqn::concentrationA_1'} and~\eqref{eqn:concentrationA_2'} imply
that with probability at least $1  - kn^{-3} \ge 1 - n^{-2}$, for all $2 \le j \le k$,
\begin{eqnarray}
A_1'-A_j' &\geq& (1-\varepsilon_1)\E(A_1'|\mathcal C)-(1+\varepsilon_j)\E(A_j'|\mathcal C) \nonumber\\
&=&\E(A_1'-A_j'|\mathcal C) -2\sqrt{9(\log n) \E(A_1'|\mathcal C)}
\end{eqnarray}
and thus
\begin{equation}\label{jcc0gt6}
   \pi(A_1')-\pi(A_j') \; \geq \; \E(\pi(A_1')-\pi(A_j')|\mathcal C)
 -2\sqrt{\frac{9(\log n)\E(\pi(A_1')|\mathcal C)}{n}}.
\end{equation}
The right-hand side of~\eqref{nk7r} is non-increasing with increasing $j$, so for each
$2 \le j \le k$,
\begin{eqnarray}
\E(\pi(A_j')|\mathcal C)
   &\leq& \pi(A_2)(1 + \pi(A_2)-\sum_{i=1}^k\pi(A_i)^2) +\lambda^2
     +2\lambda\pi(A_1)\label{eqn:upperpi(A_j')}.
\end{eqnarray}
Let $\Delta = \pi(A_1)-\pi(A_2)$.
Inequalities~\eqref{eqn:lowerEA_j'} and~\eqref{eqn:upperpi(A_j')} give for each $2 \le j \le k$,
\begin{eqnarray}
\E(\pi(A_1')-\pi(A_j')|\mathcal C)
     &\geq& \pi(A_1)\left(1+\pi(A_1)-\sum_{i=1}^k\pi(A_i)^2 \right)
               - 2\lambda\pi(A_1)-(5/4)\lambda^2 \nonumber\\
&& - \left( \pi(A_2)\brac{1 + \pi(A_2)-\sum_{i=1}^k\pi(A_i)^2} +(1/4)\lambda^2
     +2\lambda\pi(A_1)\right) \nonumber\\
&=&  \Delta\brac{1+\pi(A_1)+\pi(A_2)-\sum_{i=1}^k \pi(A_i)^2}
        - 4\lambda\pi(A_1) -(3/2)\lambda^2\nonumber\\
&\geq& \Delta(1+ \pi(A_1)\pi(A_1^c))- 4\lambda \pi(A_1)-2\lambda^2 \label{klm3s1} \\
&\geq& \Delta+\Delta\pi(A_1)/7. \label{eqn:lowerEA_1-A_2}
\end{eqnarray}
Inequality~\eqref{klm3s1} holds because $\sum_{i=2}^k \pi(A_i)^2 \le \pi(A_2)$.
In the last step we used that $\pi(A_1^c)\ge 1/3$ and $\lambda \leq \Delta/32$.
From~\eqref{jcc0gt6}, \eqref{eqn:lowerEA_1-A_2} and~\eqref{eqn:kdoow},
with probability at least $1-n^{-2}$,
\begin{eqnarray}\nonumber
\min_{2 \le j \le k} \{\pi(A_1')-\pi(A_j')\} &\ge&
\E(\pi(A_1')-\pi(A_j')|\mathcal C)
-\frac{\ve_1}{n}\E(A_1'|\mathcal C)
-\frac{\ve_j}{n}\E(A_j'|\mathcal C)\\
 &\geq& \Delta(1+\pi(A_1)/7)-6\sqrt{\frac{2\log n}{n}\pi(A_1)}\nonumber\\
&=& \Delta\left(1+\pi(A_1)/7-\frac{6}{\Delta}\sqrt{\frac{2\log n}{n}\pi(A_1)}.\right)\nonumber.
\end{eqnarray}
By assumption,
$\Delta \ge 240\sqrt{2\log(n)/A_1}$, so with probability at least $1-n^{-2}$,
\begin{eqnarray}
\min_{2 \le j \le k} \{\pi(A_1')-\pi(A_j')\} &\geq& \Delta(1+\pi(A_1)/10),\label{eqn:Delta-2}
\end{eqnarray}
and we get we get~\eqref{eqn:Delta}.
This also proves
that \whp\ opinion 1 remains the majority opinion.
The order between the other opinions might change.

To get information about the increase in the number of vertices with opinion 1,
we use  Equation~\eqref{eqn:lowerEA_j'} with $j=1$ and the assumption that $\lambda \leq \Delta/32$. We  obtain
\begin{eqnarray}
\E(\pi(A_1')|\mathcal C) &\geq& \pi(A_1)(1+\pi(A_1)-\sum_{i=1}^k \pi(A_i)^2)-\Delta\pi(A_1)/16-\Delta^2/(32)^2 \nonumber\\
&\geq& \pi(A_1)(1+\pi(A_1)-\pi(A_1)^2-\pi(A_2)\pi(A_1^c) - \Delta/16-\Delta/(32)^2)\nonumber\\
&>& \pi(A_1)(1+\Delta/4). \label{eqn:yh76}
\end{eqnarray}
By using Chernoff bounds~\eqref{eqn:k33dz} with
$\varepsilon = \sqrt{\frac{9\log n}{\E(A_1'|\mathcal C)}}$
and Inequalities~\eqref{eqn:yh76} and~\eqref{eqn:kdoow},
with probability at least $1-n^{-2}$,
\begin{eqnarray}
A_1' &\geq& A_1(1+\Delta/4)-\sqrt{\E(A_1'|\mathcal C)9\log n} \; \ge \; A_1(1+\Delta/4)-\sqrt{18A_1\log n}\nonumber\\
&=& A_1(1+\Delta/4-3\sqrt{2}\sqrt{\log n/A_1)}).\label{jkdbua}
\end{eqnarray}
From the assumptions of the lemma, we have
$\Delta/20 = (A_1-A_2)/(20n) \ge 3\sqrt{2}\sqrt{\log n/A_1}$.
Therefore~\eqref{jkdbua} implies
$A_1' \ge A_1(1+\Delta/5)$, which is the same as~\eqref{xi}.
\end{proof}

\begin{sloppy}

\begin{lemma}\label{lem:main}
Assume $A_1 \le 2n/3$, $A_1-A_2 \geq C n \sqrt{(\log n )/A_1}$
(requiring $A_1 \ge C^{2/3} n^{2/3} \log^{1/3} n$),
where $C = 240 \sqrt{2}$, and $\lambda \leq {(A_1-A_2)}/{(32n)}$.
Then with probability at least
$1- 1/n$,
after at most
$O((n/A_1) \log(A_1/(A_1 - A_2)))$
rounds, the number of vertices with opinion 1 is at least $2n/3$.
\end{lemma}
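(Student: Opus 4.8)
The plan is to iterate Lemma~\ref{lem:main-oneStep}. That lemma shows that in one round, conditioned on an event of probability at least $1-1/n^2$, both the size $A_1$ and the advantage $\Delta = A_1 - A_2$ grow by a multiplicative factor: $A_1' \ge A_1(1 + \Delta/(5n))$ and $\min_{j\ge 2}\{A_1'-A_j'\} \ge \Delta(1 + A_1/(10n))$. The strategy is to run the process until $A_1$ first reaches $2n/3$ and union-bound the bad events over all rounds.

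First I would set up the potential functions $a_t = A_1^{(t)}/n = \pi(A_1^{(t)})$ and $\delta_t = (A_1^{(t)} - A_2^{(t)})/n$ and argue that the hypotheses of Lemma~\ref{lem:main-oneStep} are preserved from round to round (on the good event): we stay in the regime $a_t \le 2/3$ by assumption until the process stops; the separation condition $\delta_t \ge C\sqrt{(\log n)/(n a_t)}$ is maintained because $a_t$ is non-decreasing and $\delta_t$ is non-decreasing, both by the one-step lemma; and $\lambda \le \delta_t/32$ likewise persists since $\delta_t$ only grows. So Lemma~\ref{lem:main-oneStep} applies at every round before the target is reached, conditioned on the intersection of the good events.

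Next I would bound the number of rounds. The recursions $a_{t+1} \ge a_t(1 + \delta_t/5)$ and $\delta_{t+1} \ge \delta_t(1 + a_t/10)$ show the two quantities reinforce each other. I would split the analysis into two phases. In the first phase, while $\delta_t \le a_t$ (say), the advantage grows by at least a factor $1 + a_t/10 \ge 1 + a_0/10$ each round — actually I would instead track $\delta_t$ against $a_t$: since $a_t \ge a_0$ always, $\delta_t$ multiplies by at least $1 + a_0/10$ per round, so after $O((n/A_1)\log(a_0/\delta_0)) = O((n/A_1)\log(A_1/(A_1-A_2)))$ rounds we reach $\delta_t \ge c\, a_t$ for a small constant $c$ (using that $a_t \le 2/3$ is bounded, so $\delta_t$ cannot "overshoot" past a constant multiple of $a_t$ before $a_t$ itself grows). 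In the second phase, once $\delta_t \ge c\, a_t \ge c\, a_0$, the recursion $a_{t+1} \ge a_t(1 + c\, a_0/5)$ multiplies $a_t$ by a fixed factor $1 + \Omega(A_1/n)$ each round, so $a_t$ reaches $2/3$ within a further $O((n/A_1)\log(1/a_0)) = O((n/A_1)\log n)$ rounds; but since the lemma's bound also has a $\log n$ term absorbed elsewhere (in Lemma~\ref{lem:main} the stated bound is only $O((n/A_1)\log(A_1/(A_1-A_2)))$), I need to be more careful — here one uses that once $\delta_t$ and $a_t$ are both within constant factors, they jointly double in $O(n/A_1)$ rounds, so $O((n/A_1)\log(A_1/(A_1-A_2)))$ rounds suffice total because the "distance" from the start to the $2n/3$ threshold, measured logarithmically, is dominated by $\log(A_1/(A_1-A_2))$. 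Let $T$ denote this total round bound.

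Finally I would do the union bound: the probability that any of the first $T$ rounds fails the Lemma~\ref{lem:main-oneStep} event is at most $T \cdot n^{-2}$. Since $T = O((n/A_1)\log(A_1/(A_1-A_2))) = O(n\log n) = o(n^2)$ (using $A_1 \ge 1$ and $A_1 - A_2 \ge 1$), this is at most $1/n$ for $n$ large, giving the claimed success probability $1 - 1/n$. The main obstacle I anticipate is the round-count bookkeeping in the previous paragraph: making precise the claim that the coupled growth of $a_t$ and $\delta_t$ reaches the $2/3$ threshold in $O((n/A_1)\log(A_1/(A_1-A_2)))$ rounds rather than the naive $O((n/A_1)\log n)$, which requires exploiting that $\delta_t$ catches up to a constant fraction of $a_t$ quickly and that thereafter the growth rate of $a_t$ is governed by $\delta_t \asymp a_t$, not by the tiny initial $\delta_0$. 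The Chernoff/eigenvalue estimates themselves are already packaged inside Lemma~\ref{lem:main-oneStep}, so the remaining work is purely the deterministic recursion analysis plus the union bound.
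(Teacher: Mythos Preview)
Your proposal is correct and follows essentially the same route as the paper: iterate Lemma~\ref{lem:main-oneStep}, verify that its hypotheses persist from round to round (both $a_t$ and $\delta_t$ are nondecreasing on the good event), analyse the coupled recursions $a_{t+1} \ge a_t(1+\delta_t/5)$ and $\delta_{t+1} \ge \delta_t(1+a_t/10)$ in two phases, and union-bound over rounds. The paper carries out exactly this plan.

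Two points where your write-up needs tightening, both of which you partly flag yourself. First, in Phase~2 the correct accounting is a geometric series: once $\delta_t \ge x(0)$, the time for $a_t$ to double is $O(1/a_t)$, not $O(1/a_0)$, so the successive doubling times are $O(1/x(0)), O(1/(2x(0))), O(1/(4x(0))),\ldots$ and the \emph{entire} second phase costs $O(1/x(0)) = O(n/A_1)$ rounds. This is the precise form of your remark that ``the growth rate of $a_t$ is governed by $\delta_t \asymp a_t$''; your alternative justification that the logarithmic distance is ``dominated by $\log(A_1/(A_1-A_2))$'' is not correct (take $A_1 = n^{2/3}$, $A_1 - A_2 = A_1/2$). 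Second, in the union bound, $T = O(n\log n)$ is too coarse to conclude $T/n^2 \le 1/n$; use the hypothesis $A_1 \ge C^{2/3} n^{2/3}\log^{1/3} n$, which forces $T = o(n)$. The paper sidesteps this by union-bounding over the first $n$ rounds and then checking that the target is reached before round $n$.
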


\end{sloppy}

\begin{proof}
We apply Lemma~\ref{lem:main-oneStep} to consecutive rounds until the size of opinion 1 reaches $2n/3$.
Since \whp\ the difference between the size of opinion 1 and the size of the second largest opinion increases,
our assumption about $\lambda$ in Lemma~\ref{lem:main-oneStep} is maintained from round to round.
At the end of each round the ordering of the opinions according to their sizes can change.
In that case we exchange the labels of the opinions so that $A_1(t) \geq A_2(t) \cdots \ge A_k(t) $ for every round $t$.
Lemma~\ref{lem:main-oneStep}, however, implies
that \whp\ opinion $1$ remains the largest opinion, so it is not relabeled.

Denote by $x(i)$ the fraction of vertices with opinion 1 at the end of round $i$, where $x(0) = \pi(A_1)$,
and by $y(i)$ the difference between the fraction of vertices with opinion 1 and the fraction of vertices with
the second largest opinion, where $y(0) = \Delta = \pi(A_1) - \pi(A_2) < x(0)$.
By~\eqref{xi} and~\eqref{eqn:Delta} and induction on the number of rounds,
with probability at least $1-1/n$,
for each round $1 \le i \le n$, if $x(i) < 2/3$, then
\begin{eqnarray}
x(i) & \geq & x(i-1)(1+y(i-1)/5), \label{hjjowb-x}\\
y(i) & \geq & y(i-1)(1+x(i-1)/10).\label{hjjowb-y}
\end{eqnarray}

Iterating~\eqref{hjjowb-x} and~\eqref{hjjowb-y} for $j = \lceil 10/x(0) \rceil < n$ rounds, we get
$y(j) \ge 2y(0)$ and $x(j) \ge x(0) + y(0)$, or
$x(i) \ge 2/3$ for some $i \le j$.
Repeating this $r= \lceil \log_2(x(0)/y(0)) \rceil$ times, we get for round $i_1 = rj < n$,
$y(i_1) \ge x(0)$ and
$x(i_1) \ge x(0) + y(0) + 2y(0) + 4y(0) \cdots + 2^{r-1}y(0) \ge 2 x(0)$,
or $x(i) \ge 2/3$ for some $i \le i_1$.

If for some $q \ge 1$, $y(i_q) \ge 2^{q-1} x(0)$ and
$x(i_q) \ge 2^q x(0)$, or $x(i) \ge 2/3$ for some $i \le i_q$,
then at the end of round $i_{q+1} = i_q + \lceil 10/(2^q x(0)) \rceil$,
$y(i_{q+1}) \ge 2^{q} x(0)$ and
$x(i_{q+1}) \ge 2^{q+1} x(0)$, or
$x(i) \ge 2/3$ for some $i \le i_{q+1}$, or $i_{q+1} > n$.
Taking $q = \lceil \log_2 (1/x(0)) \rceil$, we have
$i_q = O((1/x(0)) \log(x(0)/y(0))) = O((n/A_1) \log(A_1/(A_1 - A_2)))$
(observe that $i_q < n$) and $2^q x(0) \ge 1$, so we must have $x(i) \ge 2/3$ for some $i \le x(i_q)$.
\end{proof}

When the largest opinion reaches the size $2n/3$, it will take over the whole graph
within additional $O(\log n)$ rounds.
The progress of  voting in this final stage would be slowest, if all minority opinions were joined
together into a single ``second'' opinion.
The proof of the next lemma follows the proof from~\cite{cooper2015fast}
that two-sample voting finishes in $O(\log n)$ rounds, if there are two opinions,
the majority opinion has size at least $cn$, for a constant $c > 1/2$, and $\lambda$ is sufficiently small.


\begin{lemma}\label{lemma:main2}
Let $G$ be a connected regular graph with $\lambda \leq 1/4$.
If the majority opinion has size at least $2n/3$, then
with probability at least $1 - n^{-2}$,
the voting finishes within $\mathcal O(\log n)$ rounds.
\end{lemma}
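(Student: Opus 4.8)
The plan is to show that once the majority opinion $A_1$ has size at least $2n/3$, the complementary set $B = A_1^c$ of size at most $n/3$ shrinks geometrically each round until it vanishes. Write $b = \pi(B) = |B|/n \le 1/3$ for the fraction of vertices not holding opinion $1$ at the start of a round, and let $B'$ be the set of vertices not holding opinion $1$ after one round. A vertex ends up outside $A_1$ after a round only if it samples two neighbours with the same non-majority opinion; grouping all the minority opinions together can only help such an event, so the expected size of $B'$ is at most $n$ times $\sum_{i \ge 2} R(V, A_i) \le \sum_{i\ge 2}(\pi(A_i)^2 + \lambda^2 \pi(A_i)\pi(A_i^c))$ by Lemma~\ref{lemma:R(V,A)-piA2}. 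Since $\sum_{i\ge 2}\pi(A_i)^2 \le (\sum_{i\ge2}\pi(A_i))^2 = b^2$ and $\sum_{i\ge2}\pi(A_i)\pi(A_i^c) \le b$, we get $\E(\pi(B')\mid\mathcal C) \le b^2 + \lambda^2 b \le b(1/3 + 1/16) < b/2$ using $\lambda \le 1/4$ and $b \le 1/3$. So the expected fraction of dissenters at least halves in each round.

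The second step is concentration. As long as $|B| = bn$ is reasonably large, say $bn \ge \alpha \log n$ for a suitable constant $\alpha$, the number of vertices entering $B'$ is a sum of independent indicators (the two-sample choices of distinct vertices are independent) with mean $\mu := \E(|B'|\mid\mathcal C)$, and $\mu \le bn/2$. We want to conclude $|B'| \le (3/4)bn$ or so with probability $1 - n^{-3}$. If $\mu \ge bn/4$ then a Chernoff bound~\eqref{eqn:k33dz} with constant $\varepsilon$ gives failure probability $\exp(-\Omega(bn)) = \exp(-\Omega(\log n)) \le n^{-3}$; if $\mu < bn/4$ then we need an upper-tail bound at a value that is a constant multiple of the mean, which again is $\exp(-\Omega(bn))$ via~\eqref{eqn:k33dz} or~\eqref{eqn:k99dz} — here one should keep track of the fact that when $\mu$ is much smaller than $bn$, hitting $(3/4)bn$ means deviating by many multiples of $\mu$, which only makes the bound stronger. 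Either way, with probability $1 - n^{-3}$ the dissenting set shrinks by a constant factor. Taking a union bound over the $O(\log n)$ rounds of this phase, with probability $1 - O(\log n) n^{-3} \ge 1 - n^{-2}/2$ the set $B$ has size at most $O(\log n)$ after $O(\log n)$ rounds.

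The third step handles the tail when $|B|$ has dropped below the concentration threshold $\alpha\log n$. At that point $b = O((\log n)/n)$, so $\E(|B'|\mid\mathcal C) \le b^2 n + \lambda^2 b n \le \lambda^2 bn + o(1) \le bn/16 + o(1)$, i.e. the expected number of dissenters is at most a small constant fraction of the current (already tiny) number. Now apply Markov's inequality, or better, observe that $|B'|$ is stochastically dominated by a binomial with mean $bn/16 + o(1)$: the probability that $|B'| \ge 1$ is at most this mean plus $o(1)$ — this is not yet good enough in one shot, so instead argue that over $O(\log n)$ further rounds the number of dissenters, being a supermartingale that shrinks by a constant expected factor, reaches $0$ with probability $1 - n^{-2}/2$. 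Concretely, once $|B| \le \alpha \log n$, after each round $\E|B'| \le |B|/16$, so after $t$ rounds $\E|B(t)| \le (\alpha \log n)/16^t$, which is below $n^{-2}$ once $t = O(\log\log n / \log 16) = O(\log n)$; then $\Pr(|B(t)| \ge 1) \le \E|B(t)| \le n^{-2}$ by Markov. Combining the three phases, the total number of rounds is $O(\log n)$ and the total failure probability is at most $n^{-2}$.

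The main obstacle I anticipate is the bookkeeping in the concentration step when the expected number of new dissenters $\mu$ is much smaller than the current count $bn$: one must be careful that the Chernoff bound is applied in the regime $\delta \ge 1$ of~\eqref{eqn:k99dz} with $(1+\delta)\mu = \Theta(bn)$, giving an exponent $\Theta(bn)$ rather than the weaker $\Theta(\mu)$, and similarly that the union bound over rounds does not degrade below threshold. A secondary subtlety is verifying that the sampling indicators really are independent across vertices within a round (they are, since each vertex samples its own two neighbours independently), which justifies using the Chernoff bounds of~\eqref{eqn:k33dz}–\eqref{eqn:k99dz} as in Lemma~\ref{lem:main-oneStep}; and one must also confirm the hypothesis $\lambda \le 1/4$ is compatible with the hypotheses carried over from Lemma~\ref{lem:main}, which it is since $\lambda \le (A_1 - A_2)/(32n) \le 1/32 < 1/4$.
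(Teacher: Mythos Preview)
Your first step contains a genuine error in the dynamics of two-sample voting. You write that ``a vertex ends up outside $A_1$ after a round only if it samples two neighbours with the same non-majority opinion,'' but this is false: in two-sample voting a vertex \emph{keeps its current opinion} whenever the two sampled opinions disagree. So a vertex $v\in B$ that samples one neighbour in $A_1$ and one in $B$ stays in $B$, even though it did not see two matching minority opinions. Consequently the correct expression for the expected minority fraction is
\[
\E(\pi(B')\mid\mathcal C)=\pi(B)-R(B,A_1)+\sum_{i\ge 2}R(A_1,A_i),
\]
not $\sum_{i\ge 2}R(V,A_i)$. For a typical vertex in $B$ the probability of staying in $B$ is $1-P(v,A_1)^2\approx 1-(2/3)^2=5/9$, so the contraction factor is of order a constant strictly less than $1$, not $b+\lambda^2$. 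The paper derives $\E(\pi(B')\mid B)\le (7/8)\,\pi(B)$ by bounding $R(V,B)$ via Lemma~\ref{lemma:R(V,A)-piA2} and using $P(v,A)^2+P(v,B)^2\ge 1/2$ to lower-bound $S_{\mathcal C}(B)$; your bound $\E(\pi(B'))\le b^2+\lambda^2 b$ is simply too strong and does not hold.

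Once the contraction factor is corrected to a constant $c\in(0,1)$ (here $7/8$), your three-phase plan could be made to work, but at that point it is more elaborate than necessary. The paper avoids your Chernoff phase entirely: it uses Hoeffding once to show that $\pi(B_t)$ never climbs back above $1/3$ (with probability $1-o(n^{-8})$), then iterates the unconditional expectation bound $\E(\pi(B_t))\le (7/8)\E(\pi(B_{t-1}))+\Pr(\pi(B_{t-1})>1/3)$ for $T=O(\log n)$ rounds and finishes with a single Markov inequality $\Pr(B_T\neq\emptyset)\le n\,\E(\pi(B_T))$. Your phase~3 implicitly needs the same ``no bounce-back'' argument to justify iterating the conditional contraction unconditionally, so you would end up reproducing this step anyway.
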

\begin{proof}
Let $A$ represent the current set of vertices with the majority opinion.
We put all minority opinions into one opinion set
$B = V \setminus A$ and analyse two-sample voting with these two opinions.
The majority opinion in this process is always a subset of the majority opinion in the original process,
when there are distinct minority opinions.

Let $A'$ and $B'$ be the corresponding sets in the next round.
We compute $\E(A'|A)$. Observe that since
in our context $\mathcal C = (A,B)$ and
$S_{\mathcal C}(A) = R(A,A)+R(B,A)$, then, from Equation~\eqref{eqn:ExpectedDiff}, we have
\begin{eqnarray}
\E(\pi(B')|B) &=&\pi(B)+R(V,B)-S_{\mathcal C}(B) \nonumber\\
&\leq& \pi(B) + \pi(B)^2+\lambda^2 \pi(B)\pi(A) - \sum_{x \in B} \pi(x)(P(x,A)^2+P(x,B)^2) \nonumber\\
&\leq& \pi(B) + \pi(B)^2+\lambda^2 \pi(B)\pi(A) - \pi(B)/2 \nonumber\\
&=& \pi(B) + \pi(B)(1/2-(1-\lambda^2)\pi(A)). \label{bn9os6q}
\end{eqnarray}
Given $\lambda \leq 1/4$ and $\pi(A) \geq 2/3$, \eqref{bn9os6q} implies
\begin{equation}\label{fghw0pqy6}
   \E(\pi(B')|B) \; \leq \; (7/8)\pi(B).
\end{equation}
A standard coupling shows that if $B_1 \subseteq  B_2$,
then $\Prob(\pi(B')\geq \delta \, |\,  B = B_1) \leq \Prob(\pi(B') \geq \delta\,|\, B= B_2)$.
Take arbitrary sets $B_1\subseteq B_2 \subseteq V$ such that $\pi(B_1) \leq 1/3$
and $\pi(B_2) = 1/3$, and apply
Hoeffding's Inequality
to get
\begin{eqnarray}
{\Prob((\pi(B') \geq 1/3)\:|\:B = B_1)}
& \le & \Prob(\pi(B') \geq 1/3\:|\:B = B_2)\nonumber\\
& = &  \Prob(|B'| \geq n/3 \:|\: B = B_2)\nonumber\\
& \le & \Prob(|B'| \geq \E(|B'|\, | B=B_2) + n/24 \:|\: B = B_2) \label{lkr8w} \\
& \le & e^{-2 (n/24)^2/n} = o(n^{-10}). \label{ml0w2s}
\end{eqnarray}
Inequality~\eqref{lkr8w} holds because $\E(|B'|\, | B=B_2) \le (7/8)\pi(B_2) = (7/24)n$,
and~\eqref{ml0w2s} follows from Hoeffding's Inequality.
The  bound above implies
that in the next $n$ rounds, the probability
to have a minority with more than $n/3$ opinions is $o(n^{-9})$.

Let $B_t$ be the set with the minority opinion after
$t$ rounds of this final stage of voting.
We assume that $B_0$ is a fixed set such that $|B_0| \le (1/3)n$.
To obtain the claimed bound on the number of rounds,
we use~\eqref{fghw0pqy6} and~\eqref{ml0w2s} in a straightforward application of Markov's Inequality:
\begin{equation}\label{jklr56cv}
 \Prob(B_t \neq \emptyset) \; = \; \Prob(\pi(B_t) \geq 1/n) \; \le \; n \cdot \E(\pi(B_t)).
\end{equation}
Using~\eqref{fghw0pqy6}, for each $t \ge 1$,
\[
   \E(\pi(B_t)|B_{t-1}) \; \leq \; \left\{ \begin{array}{ll}
       (7/8)\pi(B_{t-1}), & \mbox{if $B_{t-1} \le 1/3$,} \\
        1, & \mbox{if $B_{t-1} > 1/3$.}
      \end{array} \right.
\]
This gives
\[  \E(\pi(B_t)) \; = \; \E(\E(\pi(B_t)|B_{t-1}))
     \; \le \; (7/8) \E(\pi(B_t)) + \Prob(B_{t-1} > 1/3),
\]
Applied the above inequality iteratively to obtain
\begin{eqnarray}
 \E(\pi(B_t))
     & \le & (7/8)^t \pi(B_0) + \sum_{\tau = 0}^{t-1} \Prob(B_{\tau} > 1/3)
     \; \le \; (1/3) \cdot (7/8)^t + o(n^{-8}). \nonumber
\end{eqnarray}
Thus for $T = K\log n$ with $K =4/\log(8/7)$,
$\E(\pi(B_T)) \le n^{-3}$, so~\eqref{jklr56cv}
implies that with probability at least $1-n^{-2}$,
$B_T$ is empty, that is, the voting finishes in $K\log(n)$ rounds.
\end{proof}

\section{Reducing Three-sample voting to Two-sample voting}
\label{3to2}

In this section we study the three-sample voting process, which is similar to the two-sample voting process but  samples three neighbours in each round. Additionally, if all three opinions are distinct, the vertex adopts the opinion of the first sampled neighbour.
Formally, each vertex $v$ selects three random neighbours with replacement and considers their opinions, say, $Y_{v,1}, Y_{v,2}, Y_{v,3}$.
Vertex $v$ changes its opinion to the majority of $\{Y_{v,1}, Y_{v,2}, Y_{v,3}\}$, or, if there is no majority, to $Y_{v,1}$.
Suppose in a given round we have $k$ opinions. Let $\mathcal C = (A_1, \ldots, A_k)$ be the partition of the vertices given by the opinions, where $A_j$ is the set of vertices with opinion $j$. Let $A_j''$ be the vertices with opinion $j$ at the next round.
Moreover, let $A_j'$ be the set of vertices $v$ such that $Y_{v,1}=j$.

The following lemma will allows us to use the results of Lemma~\ref{lem:main} and Lemma~\ref{lemma:main2} for the three-sample protocol.
Due to space restrictions the proof of the lemma, and the explanation of its application in  Lemma~\ref{lem:main} and Lemma~\ref{lemma:main2} is given in the Appendix.

\begin{lemma}\label{lemma:transfermain}
Let $G$ be a connected graph and let $\mathcal C = (A_1,\ldots, A_k)$ partition $V$. Then
\begin{eqnarray}\label{eqn:Expected''}
\E(\pi(A_j'')| \mathcal C) = \pi(A_j)+R(V,A_j)-\E(S_{\mathcal C}(A_j')| \mathcal C)
\end{eqnarray}
Moreover,
\begin{eqnarray}\label{eqn:lowerE(S_C(A_j'))}
\E(S_{\mathcal C}(A_j')|\mathcal C) \geq \pi(A_j)\sum_{i=1}^k \pi(A_i)^2-2\lambda \pi(A_j)^{1/2}\sum_{i=1}^k \pi(A_i)^{3/2}
\end{eqnarray}
and
\begin{eqnarray}\label{eqn:upperE(S_C(A_j'c))}
\E(S_{\mathcal C}(A_j'^c)|\mathcal C) \geq \pi(A_j^c)\sum_{i=1}^k \pi(A_i)^2-2\lambda \pi(A_j^c)^{1/2}\sum_{i=1}^k \pi(A_i)^{3/2}.
\end{eqnarray}
If $\mathcal C = (A,B)$, then
 \begin{eqnarray}\label{eqn:upperE(S_C(B'))}
 \E(S_{\mathcal C}(B')|\mathcal C) \geq \pi(B)/4.
 \end{eqnarray}
\end{lemma}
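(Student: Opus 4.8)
The plan is to reduce the entire lemma to a single explicit formula for the probability that a fixed vertex $v$ holds opinion $j$ after one round of three-sample voting. Writing $p_i=P(v,A_i)$ for the probability that a uniformly random neighbour of $v$ has opinion $i$, I would condition on the first sample: given $Y_{v,1}=j$, the vertex $v$ keeps opinion $j$ unless the other two samples agree on a common opinion different from $j$; given $Y_{v,1}\neq j$, the vertex $v$ takes opinion $j$ only when $Y_{v,2}=Y_{v,3}=j$. Since the three samples are independent, this yields $\Pr(v\in A_j''\mid\mathcal C)=p_j\bigl(1-\sum_{i\neq j}p_i^2\bigr)+(1-p_j)p_j^2$, and substituting $\sum_{i\neq j}p_i^2=\sum_i p_i^2-p_j^2$ collapses it to
\[
\Pr(v\in A_j''\mid\mathcal C)=P(v,A_j)+P(v,A_j)^2-P(v,A_j)\sum_{i=1}^k P(v,A_i)^2 .
\]
Alongside this I would record the immediate identity $\Pr(v\in A_j'\mid\mathcal C)=\Pr(Y_{v,1}=j)=P(v,A_j)$. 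This case analysis --- making the split genuinely disjoint and checking that the algebra telescopes --- is the one delicate point of the proof; everything afterwards is bookkeeping with the Markov-chain identities of the preliminary section.

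For~\eqref{eqn:Expected''}, multiply the displayed identity by $\pi(v)$ and sum over all $v$. The first term gives $\sum_v\pi(v)P(v,A_j)=\pi(A_j)$ by stationarity of $\pi$, the second gives $R(V,A_j)$ by definition, and the third gives $\sum_{i=1}^k\sum_v\pi(v)P(v,A_j)P(v,A_i)^2$. On the other hand, since $R(B,A_i)=\sum_{x\in B}\pi(x)P(x,A_i)^2$ is additive in $B$ (each vertex $x$ contributes $\pi(x)P(x,A_i)^2$ exactly when $x\in B$) and $\Pr(x\in A_j'\mid\mathcal C)=P(x,A_j)$, we get $\E(S_{\mathcal C}(A_j')\mid\mathcal C)=\sum_{i=1}^k\E(R(A_j',A_i)\mid\mathcal C)=\sum_{i=1}^k\sum_x\pi(x)P(x,A_j)P(x,A_i)^2$, which is precisely that third term; this proves~\eqref{eqn:Expected''}. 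Note that~\eqref{eqn:Expected''} is the three-sample counterpart of~\eqref{eqn:ExpectedDiff}, with the deterministic quantity $S_{\mathcal C}(A_j)$ replaced by its expectation over the random first-sample sets.

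For the lower bounds~\eqref{eqn:lowerE(S_C(A_j'))} and~\eqref{eqn:upperE(S_C(A_j'c))}, I would bound each inner sum $\sum_x\pi(x)P(x,A_j)P(x,A_i)^2$ from below exactly as in the proof of Lemma~\ref{lemma:R(A,B)>}: since $\sum_x\pi(x)P(x,A_j)/\pi(A_j)=1$, convexity of $z\mapsto z^2$ gives
\[
\sum_x\pi(x)P(x,A_j)P(x,A_i)^2\;\ge\;\frac{1}{\pi(A_j)}\Bigl(\sum_x\pi(x)P(x,A_j)P(x,A_i)\Bigr)^2\;=\;\frac{Q_2(A_j,A_i)^2}{\pi(A_j)},
\]
where $Q_2$ is the flow function of the two-step chain $P^2$ and the last equality is the reversibility computation of Lemma~\ref{lemma:RVA = Q2AA}. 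Applying Lemma~\ref{lemma:Q(A,B)^2} to $P^2$, whose second absolute eigenvalue is $\lambda^2$, summing over $i$, and discarding the harmless factors via $\lambda^2\le\lambda$ and $\pi(A_j^c)^{1/2},\pi(A_i^c)^{1/2}\le 1$, gives $\E(S_{\mathcal C}(A_j')\mid\mathcal C)\ge\pi(A_j)\sum_i\pi(A_i)^2-2\lambda\pi(A_j)^{1/2}\sum_i\pi(A_i)^{3/2}$, which is~\eqref{eqn:lowerE(S_C(A_j'))}. Inequality~\eqref{eqn:upperE(S_C(A_j'c))} follows by the identical argument applied to $V\setminus A_j'$, using $\Pr(x\in V\setminus A_j'\mid\mathcal C)=P(x,A_j^c)$.

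Finally, for~\eqref{eqn:upperE(S_C(B'))} in the two-opinion case $\mathcal C=(A,B)$: here $P(x,A)+P(x,B)=1$ for every vertex $x$, so $P(x,A)^2+P(x,B)^2\ge 1/2$. Since $S_{\mathcal C}(B')=\sum_{x\in B'}\pi(x)\bigl(P(x,A)^2+P(x,B)^2\bigr)$ and $\Pr(x\in B'\mid\mathcal C)=P(x,B)$, taking expectations gives $\E(S_{\mathcal C}(B')\mid\mathcal C)=\sum_x\pi(x)\bigl(P(x,A)^2+P(x,B)^2\bigr)P(x,B)\ge(1/2)\sum_x\pi(x)P(x,B)=\pi(B)/2\ge\pi(B)/4$, with room to spare. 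To summarize, the only genuine obstacle is pinning down the combinatorial identity of the first paragraph; once that is in hand all four assertions follow from the preliminary lemmas by routine manipulation.
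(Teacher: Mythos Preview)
Your proof is correct. The derivation of~\eqref{eqn:Expected''} is essentially the same as the paper's: both condition on the first sample $Y_{v,1}$ and decompose accordingly, you vertex-by-vertex and the paper at the set level, but the underlying identity is the same.

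The genuine divergence is in the lower bounds~\eqref{eqn:lowerE(S_C(A_j'))} and~\eqref{eqn:upperE(S_C(A_j'c))}. The paper applies Lemma~\ref{lemma:BoundsS_c(A)} directly to the \emph{random} set $A_j'$, obtaining $S_{\mathcal C}(A_j')\ge \pi(A_j')\sum_i\pi(A_i)^2-2\lambda\,\pi(A_j')^{1/2}\sum_i\pi(A_i)^{3/2}$, and then takes expectations using $\E(\pi(A_j')\mid\mathcal C)=\pi(A_j)$ together with Jensen's inequality for the concave square root, $\E(\pi(A_j')^{1/2}\mid\mathcal C)\le\pi(A_j)^{1/2}$. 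You instead first compute the expectation as the deterministic quantity $\sum_i\sum_x\pi(x)P(x,A_j)P(x,A_i)^2$, then apply Jensen for the convex square and the expander mixing lemma for $P^2$. Your route is slightly more work but yields a factor $\lambda^2$ before you relax it to $\lambda$, so it is in fact marginally sharper; the paper's route is shorter because it recycles Lemma~\ref{lemma:BoundsS_c(A)} wholesale without reopening the convexity argument. Your proof of~\eqref{eqn:upperE(S_C(B'))} via $P(x,A)^2+P(x,B)^2\ge 1/2$ is correct and even gives $\pi(B)/2$; the paper defers this part to an appendix.
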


Before proving Lemma \ref{lemma:transfermain}, we
observe that Equations~\eqref{eqn:lowerE(S_C(A_j'))} and~\eqref{eqn:upperE(S_C(A_j'c))} are enough to get
Lemma~\ref{lemma:BoundsS_c(A)} for the values $\E(S_{\mathcal C}(A_j')|\mathcal C)$,
i.e.\ the bounds we got for $S_{\mathcal C}(A_j)$ are also valid for
$\E(S_{\mathcal C}(A_j')|\mathcal C)$.
Our proof of Lemma~\ref{lem:main} is based on the concentration of sums of
Bernoulli random variables around their expected values, but we see that the expected values,
or, more precisely, the respective bounds on those values, are the same in both protocols.
Thus the ``w.h.p.'' result of Lemma~\ref{lem:main} applies also to the three-sample voting model.
The same argument but using Equation~\eqref{eqn:upperE(S_C(B'))} allows us to transfer the result of Lemma~\ref{lemma:main2}
from the two-sample to the three-sample voting.

\begin{proof}
First of all, observe that $A_j'$ is the result of choosing only one vertex, i.e. one round of standard pull voting.  For given vertex $v$ this accounts for $Y_{v,1}$. We now consider $Y_{v,2}, Y_{v,3}$ taken in the original partition $\cal C$.
Observe that given $A_j'$, then $A_j''$ is the set of vertices in $A_j'$ such that the other two opinions
taken in the original partition $\cal C$ are not equal to any opinion $i$ other than $j$, plus the set of vertices outside $A_j'$ such that the other two opinions in $\cal C$ are equal to $j$. Therefore
\begin{eqnarray}
\pi(A_j'') = \pi(A_j')+\pi(\{x \in A_j'^c: Y_{x,2} = Y_{x,3} = j\}) - \pi(\{x \in A_j': Y_{x,2} = Y_{x,3} = i, i \neq j\})
\end{eqnarray}
By a result of \cite{HassinPeleg-InfComp2001} for classical pull voting, we have $\pi(A_j'|\mathcal C) = \pi(A_j)$.
From there, it is relatively straightforward to get that
\begin{eqnarray}
\E(\pi(A_j'')|\mathcal C) &=& \pi(A_j)+\E\left(\sum_{x \in A_j'^c} \pi(x)P(x,A_j)^2- \sum_{x \in A_{j}'} \pi(x)\sum_{i \neq j}P(x,A_i)^2 \;\middle| \;\mathcal C\right)\nonumber\\
&=&\pi(A_j)+\E\left(\sum_{x \in V} \pi(x)P(x,A_j)^2- \sum_{x \in A_{j}'} \pi(x)\sum_{i=1}^kP(x,A_i)^2 \;\middle| \;\mathcal C\right)\nonumber\\
&=& \pi(A_j) + R(V,A_j)-\E(S_{\mathcal C}(A_j')|\mathcal C)
\end{eqnarray}
For the lower bound in \eqref{eqn:lowerE(S_C(A_j'))} we use Lemma~\ref{lemma:BoundsS_c(A)} to get
\begin{eqnarray}
S_{\mathcal C}(A_j') \geq \pi(A_j')\sum_{i=1}^ k \pi(A_i)^2 -2\lambda \pi(A_j')^ {1/2}\sum_{i=1}^k \pi(A_i)^{3/2}.
\end{eqnarray}
By concavity of $f(x)=x^{1/2}$ we have
\[
\E(\pi(A_j')^{1/2}|\mathcal C) \leq (\E( \pi(A_j') | \mathcal C )^{1/2}= (\pi(A_j))^{1/2},
\]
obtaining the result of Equation~\eqref{eqn:lowerE(S_C(A_j'))}. A similar argument gives us the result of Equation~\eqref{eqn:upperE(S_C(A_j'c))}.
\end{proof}

\ignore{

\section{$l$-neighbourhoods}

Suppose we have a connected regular graph $G$ with expansion $\lambda$ and suppose that $\lambda$
does not satisfy the conditions to apply our main theorem.
Then we can consider that instead of selecting two (or three) neighbours, each vertex starts two (or three)
independent $\ell$-step random walks, stopping in vertices $v_1$ and $v_2$
(or $v_1$, $v_2$ and $v_3$) and performing the two-sample (or three-sample) decision.
Since we are using $\ell$ steps of a random walk,
this is equivalent to running the protocol with transition matrix $P^\ell$.
Since the stationary distribution is the same for $P$ and $P^l$, the result of
Theorem~\ref{thm:mainResult} translates immediately to this case. The only difference is that $\lambda(P^l) = \lambda(P)^l$,
thus weakening the conditions on $\lambda$.

}



\end{document}